\documentclass[preprint,11pt]{elsarticle}

\usepackage{lineno,hyperref}
\usepackage{amsmath}
\usepackage{graphicx}
\usepackage{enumerate}
\usepackage{url}
\usepackage{color}
\usepackage[utf8]{inputenc}
\usepackage{amsthm} 
\usepackage{amssymb}
\usepackage{multirow}
\usepackage{latexsym}
\usepackage{enumitem}
\usepackage{float}
\usepackage{setspace}

\usepackage{dsfont}

\journal{Journal of \LaTeX\ Templates}

\newcommand{\PP}{\mathbb{P}\/}
\newcommand{\N}{\mathbb{N}}
\newcommand{\R}{\mathbb{R}}

\newcommand{\EE}{\mathbb{E}\,}

\def\id{\hbox{{\rm\textbf 1}\kern-.5em\hbox{{\rm\textbf 1}}}\! }

\newcommand{\bi}{\begin{itemize}}
\newcommand{\ei}{\end{itemize}}
\newcommand{\beq}{\begin{equation}}
\newcommand{\eeq}{\end{equation}}

\newtheorem{theorem}{Theorem}
\newtheorem{definition}{Definition}
\newtheorem{proposition}{Proposition}
\newtheorem{lemma}{Lemma}
\newtheorem{remark}{Remark}
\newtheorem{example}{Example}
\newcounter {num} 

\newcounter {no} 

\newcounter {nrq} 

\begin{document}

\begin{frontmatter}
\title{Markov-Based Modelling for Reservoir Management: Assessing Reliability and Resilience}

\author{M.L. G\'amiz\corref{mycorrespondingauthor}\corref{mymainaddress1}}
\ead{mgamiz@ugr.es}
\cortext[mymainaddress1]{Universidad de Granada (SPAIN)}

\author{N. Limnios\corref{mymainaddress2}}
\ead{nlimnios@utc.fr}
\cortext[mymainaddress2]{Universit\'e de Technologie de Compi\`egne (FRANCE)}

\author{D. Montoro-Cazorla\corref{mymainaddress3}}
\ead{dmontoro@ujaen.es}
\cortext[mymainaddress3]{University of Jaén (SPAIN)}

\author{M.C. Segovia-Garc\'ia\corref{mymainaddress1}}
\ead{msegovia@ugr.es}

\begin{abstract}

This paper develops a comprehensive Markov-based framework for modelling reservoir behaviour and assessing key performance measures such as reliability and resilience. We first formulate a stochastic model for a finite-capacity dam, analysing its long-term storage dynamics under both independent and identically distributed inflows, following the Moran model, and correlated inflows represented by an ergodic Markov chain in the Lloyd formulation. For this finite case, we establish stationary water balance relations and derive asymptotic results, including a central limit theorem for storage levels. The analysis is then extended to an infinite-capacity reservoir, for which normal limit distributions and analogous long-term properties are obtained. A continuous-state formulation is also introduced to represent reservoirs with continuous inflow processes, generalizing the discrete-state framework. On this basis, we define and evaluate reliability and resilience metrics within the proposed Markovian context. The applicability of the methodology is demonstrated through a real-world case study of the Quiebrajano dam, illustrating how the developed models can support efficient and sustainable reservoir management under hydrological uncertainty.

\end{abstract}

\begin{keyword}
Ergodic Markov models \sep Reservoir stochastic modelling \sep Long-term performance \sep Resilience \sep Hydrological uncertainty
\MSC[2020] 60K30; 62P12; 90B25; 90B30
\end{keyword}

\end{frontmatter}


\section{Introduction}\label{sec1}
Efficient water management is a critical challenge in the context of climate change, population growth, and increasing demand for water resources. Hydrometeorological variability introduces substantial uncertainty in reservoir inflows and outflows, complicating planning and operations and affecting water availability for domestic consumption, irrigation, energy generation, and industry \cite{Didier2018}, \cite{Ren2020}. In multi-purpose reservoirs, the need to balance competing water uses adds further complexity, requiring operational strategies that maximize resource efficiency and ensure availability under changing conditions \cite{Allawi2019}, \cite{Tukimat2014}.  

To address these uncertainties, stochastic models are commonly employed in the analysis and optimization of water systems. Such models capture the inherent variability of hydrological processes and provide valuable insights for decision-making \cite{Asefa2014}, \cite{Zhao2024}. Among these approaches, \textit{Markov chains} have proven particularly effective for representing the stochastic nature of inflow dynamics and reservoir storage behaviour. Previous studies have applied Markov models to analyse precipitation occurrence and transitions between dry and wet periods \cite{Feyerherm1967}, \cite{Gates1976}, \cite{Martin1999}, \cite{Nop2021}, and to assess drought risks in rainfed agriculture using multi-state Markov chains \cite{Fadhil2021}. 
In the context of reservoir reliability, \cite{Limnios1988}, \cite{Limnios1989} employed discrete-time Markov chains based on the Moran model \cite{Moran59} to assess the likelihood of reservoir depletion or overflow under varying inflow conditions.


While reliability has long been a central concept in the analysis of engineering systems \cite{Gamiz2022}, \cite{Gamiz2023}, \cite{Gamiz2024}, it remains a rapidly evolving field within operations research and systems engineering \cite{Aven2025}. Its statistical foundations continue to develop through both parametric and nonparametric approaches \cite{Gamiz2011}. This perspective also extends to water supply networks, where statistical methods have been applied to evaluate system performance under incomplete failure data. For example, \cite{Carrion2010} examined the reliability of a water supply network using right-censored and left-truncated break data, providing a robust approach for assessing component reliability from limited observations. In recent years, \textit{resilience} has been increasingly recognized as a complementary indicator of system performance, particularly for critical infrastructures such as energy, transport, and water systems \cite{Didier2018}, \cite{Ren2020}. Resilience is broadly defined as the capacity of a system to withstand disturbances, adapt to changes, and recover from disruptions \cite{Hosseini2016}, \cite{Pawar2021}. Its evaluation commonly involves three key dimensions: \textit{absorption}, representing the system’s initial resistance to perturbations; \textit{adaptation}, describing its ability to adjust operations during a crisis; and \textit{recovery}, assessing the speed and effectiveness with which normal conditions are restored \cite{Sathurshan2022}. These aspects have been widely applied to analyse the performance of critical infrastructures exposed to extreme events, supporting strategies that enhance robustness and response capacity \cite{Diao2016}.  

Markov processes have proven effective as a framework for quantifying resilience and dynamic performance in complex systems \cite{Zeng2021}, \cite{Tan2023}, \cite{Wu2023}. They allow for the probabilistic representation of state transitions and enable the derivation of quantitative metrics that describe reliability, recovery, and overall system stability \cite{Wu2023}. Applications of Markov-based resilience assessment can be found across multiple sectors: energy systems, where they quantify failure absorption and recovery following extreme events \cite{Zeng2021}; maritime transport infrastructures, where they support the analysis of adaptive strategies against structural failures \cite{Wu2023}; and industrial systems, such as hydrogen stations and gas compressors, where they estimate recovery times and service restoration probabilities after operational disruptions \cite{He2024}, \cite{He2025}. 

In water resource systems, resilience relates not only to service continuity but also to the capacity of reservoirs to regain operational performance following droughts, floods, or other extreme events. Several studies have used Markov models to evaluate reservoir management and mitigation strategies \cite{Vogel1995}, \cite{Zolfagharpour2021}, and to derive optimal operating rules for irrigation reservoirs under uncertain inflows using stochastic control \cite{Unami2013}, although their integration into formal resilience assessment frameworks remains relatively limited.
 In modelling reservoir dynamics, the inflow process is a key stochastic component. Early approaches often assumed independent and identically distributed (i.i.d.) inflows; however, in many practical situations, inflows exhibit temporal correlation. \cite{Lloyd1963}, \cite{Lloyd1964} introduced a formulation in which reservoir inflows are modelled as a Markov chain, accounting for such correlation effects. In this paper, we adopt this framework for correlated inflows, referring to it subsequently as the \emph{Lloyd formulation} or \emph{extended Moran model} to emphasize its connection with classical discrete-time stochastic reservoir models.


Building on this background, the present study extends the use of Markov processes to the analysis of reservoir performance under uncertainty. Following the approach of \cite{Tan2023}, we define resilience measures for systems subject to disruptive events within a \textit{Markov process–based framework}. The proposed methodology provides a quantitative basis for assessing the dynamic behaviour of water systems, supporting more efficient and robust reservoir management under stochastic hydrological conditions.

This paper proposes a multi-state Markov model based on Moran model to represent the annual water balance evolution of a reservoir located in the province of Jaén (Spain). The model relies on data collected from monitoring equipment installed in the dam, which is crucial for its implementation. Two distinct models are developed based on how inflow volumes are treated: one where inflows are considered discrete variables and another where they are treated as continuous variables. For each model, analytical results describing the system asymptotic behaviour are presented, along with reliability and resilience metrics. These metrics allow for assessing the reservoir capacity to maintain safe storage levels and its recovery following events that either compromise water supply (drought) or lead to overflow (flooding). Both aspects are crucial for water governance in the studied reservoir because of their important socio-economic implications.

The remainder of this paper is organized as follows. Section \ref{sec:model} presents the general formulation of the proposed stochastic model with discrete input. Section \ref{sec:finite} develops the Markov model for a finite-capacity reservoir, addressing both the Moran case with independent inflows and the Lloyd formulation with Markov-dependent inflows, and establishes its long-term properties including a stationary water balance and a central limit theorem for storage. Section \ref{sec:infinite} extends the analysis to the infinite-capacity case, deriving analogous asymptotic results. Section \ref{sec:continuous} considers a continuous-state formulation to represent reservoirs with continuous inflow processes. Section \ref{sec:measures} defines and discusses performance indicators for reliability and resilience within the Markovian framework. Section \ref{sec:quiebrajano} illustrates the application of the proposed approach through a real case study of the Quiebrajano dam. Finally \ref{sec:conclusions} presents concluding remarks and future research.

\section{Stochastic properties of a reservoir system with discrete input}\label{sec:model}
In our practical application, to describe the dynamical evolution of a reservoir system, specifically a dam for human water supply, we will employ a 2-dimensional stochastic process  $\{(Y_n,Z_n), n\ge 0\}$ with state space $ E_y \times E_z$. In this model, $ Y_n$ will represent the total amount of rain registered in a particular area during the $n$-th year, with $n = 0, 1,  \ldots$. It is well known that the amount of rainfall in a particular area depends on various factors, including atmospheric circulation patterns, topography, proximity to bodies of water, temperature, humidity levels, elevation, land use, and, of course, climate change. While these factors contribute to the heterogeneity of rainfall, our initial approach does not explicitly incorporate them.  Instead, we assume that yearly rainfall can be represented by a sequence of independent identically distributed (i.i.d.) random variables as a first step in our analysis, $\{Y_n, n\geq 0\}$, with distribution $\{p_y; y \in E_y\}$. Next, in a more realistic scenario, we will consider $\{Y_n, n\geq 0\}$ as a stationary Markov chain with transition matrix ${\bf P}_y$.

For any $n\ge 0$, let $Z_n$ denote the volume of water at the time (year) $n$, then  $Z_0$ is the initial stored water in the reservoir. In general, we will assume that the 2-dimensional process $\{(Y,Z)\}$, is observable. This implies that, for any $n \ge 0$, we can directly observe the rainfall amount $Y_n$ as well as the stored water volume $Z_n$. Moreover, we will model the stored water on the year $n+1$ as a function of the previous values, i.e. $Z_{n+1}=f(Y_n,Z_n)$, see Figure \ref{fig:model}.

\begin{figure}[htb]
	\centering
	\vspace{-.5cm}
	\includegraphics[width=.8\textwidth]{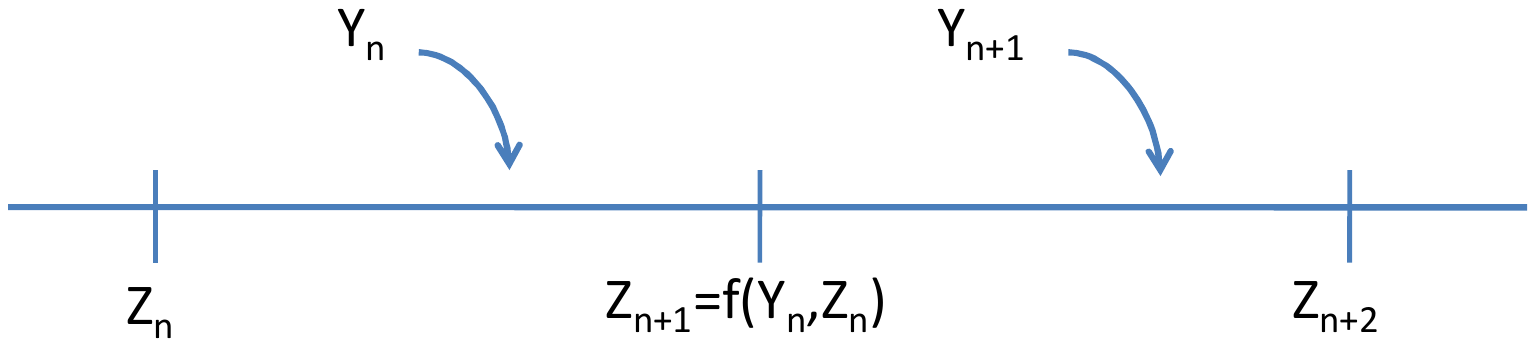}
	\vspace{-4cm}
	\caption{A  Markov model for reservoirs.}\label{fig:model}
\end{figure}

Our primary objective is to model the distribution of the stochastic process ${Z_n,, n \geq 0}$ in order to characterize the temporal dynamics of reservoir water levels. The variable $Z_n$ represents the theoretical water storage in the reservoir at the end of year $n$, derived from the hydrological balance between inflows and outflows. The actual observed volume, denoted by $\widetilde{Z}_n$, may differ from this theoretical value due to additional factors not explicitly captured in the inflow–outflow model, such as evaporation, seepage, or filtration losses. Some of these losses may be associated with structural deterioration, including the development of cracks or defects in the dam body. Therefore, analysing the discrepancy $Z_n - \widetilde{Z}_n$ provides a means to assess deviations between expected and observed performance, potentially serving as a proxy for monitoring the evolution of structural degradation over time. This relationship is illustrated in Figure~\ref{fig:water_balance}, which depicts the empirical water balance computed from the available data using the input–output equation. The curves in the figure were obtained from publicly available information on the Quiebrajano Dam, a reservoir located in southern Spain. The dotted black curve represents the daily stored water volume during 2007 ($\widetilde{Z}_n$), while the solid red curve shows the result of the daily input–output balance equation computed from the published inflow and outflow data for the same period ($Z_n$).
\begin{figure}[h]
	\centering
	\includegraphics[width=0.6\textwidth]{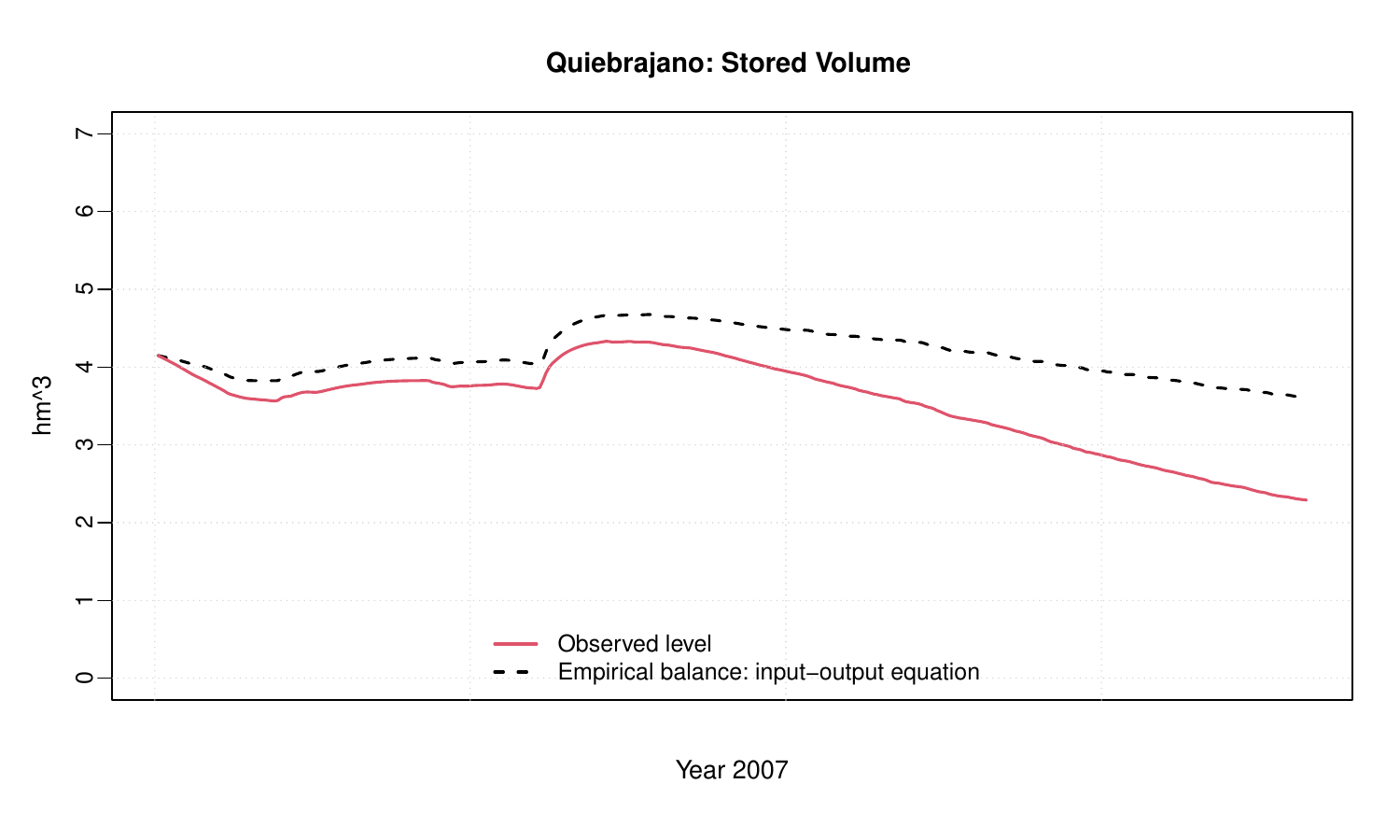}
	\vspace{-.5cm}
	\caption{Water balance during the year 2007 adjusted empirically.}\label{fig:water_balance}
\end{figure}

\section{The finite case}\label{sec:finite}
\subsection{Model description}
In our initial approach, we assume that the input sequence  $\{Y_n, n\geq 0\}$ consists of independent and identically distributed (i.i.d.) random variables taking values in a discrete set $E_y \subseteq \N$. {Let $\{p_y; y \in E_y\}$ denote the distribution of $Y_n$ for all $n$}.

We assume the Moran model \cite{Moran59}, so we have that $Z_{n}=f(Y_{n-1},Z_{n-1})$, with the following specification for the  function $f$, 
\begin{equation}\label{eq:moran}
	f(y,z)=\left\{\begin{array}{lll}
		0, & \text{if } &z+y < c_0 \\
		z+y-c_0, & \text{if } &c_0 \leq z+y \leq C_1 \\
		C_1-c_0, & \text{if } &z+y \ge C_1
	\end{array}
	\right.;
\end{equation}
where $c_0$ and $C_1$ are some positive constants, $0<c_0<C_1$. 
Compactly we can write, $Z_{n+1}=\min\{C_1-c_0,\max\{Y_n+Z_n-c_0,0\}\}$, $n \geq 0$.
For simplicity in the following and without any loss of generality we take $c_0=1$ and use notation $C$ instead of $C_1$. In these conditions, $Z$ is a Markov chain with state space $E_z=\{0,1,\ldots, C-1\}$.\\

\noindent The process $(Y,Z)$ is a Markov chain with transition probabilities as follows
\begin{eqnarray*}
	\widetilde{P}((y_0,z_0),(y_1,z_1))&=&\PP(Y_{n+1}=y_1,Z_{n+1}=z_1 \mid Y_{n}=y_0,Z_{n}=z_0 )\\
	&=&\PP(Z_{n+1}=z_1 \mid Y_{n}=y_0,Z_{n}=z_0 ) \PP(Y_{n+1}=y_1)\\
	&=& {\tt 1}_{\{z_1=(z_0+y_0-1)^+\}}p_{y_1},
\end{eqnarray*}
where $(x)^+=\max\{0,x\}$, for any $x\in \R$. We consider the lexicographical order for the elements of $E_y\times E_z$ and organize these probabilities in a matrix that we denote $\widetilde{\bf P}$. 

When the inflow process exhibits temporal dependence rather than being i.i.d., the reservoir evolution follows the so-called \emph{Lloyd formulation} of the dam equations \cite{Lloyd1964}, which generalizes Moran's original model to the case of Markov-dependent inflows.

\begin{remark}{Lloyd's formulation \cite{Lloyd1963} (extended Moran model)}\\
	When the inflow process $\{Y_n, n \ge 0\}$ is modelled as an ergodic Markov chain, with transition matrix ${\bf P}_Y$ and stationary distribution $\{p_y; \, y \in E_y\}$, the reservoir dynamics follow what \cite{Lloyd1963} referred to as the Markov inflow formulation of the dam equations. In this case, the joint process $(Y,Z)$ is itself a Markov chain with transition matrix $\widetilde{\bf P}$, whose elements are given by $
	\widetilde{P}((y_0,z_0),(y_1,z_1)) = \mathbf{1}_{\{z_1 = (z_0 + y_0 - 1)^+\}} P_{y_0,y_1}$,	for $((y_0,z_0),(y_1,z_1)) \in E = E_y \times E_z$, where the state space is ordered lexicographically.

 Let us assume that $Y_n$ has finite support $E_y=\{0,1,\ldots, K\}$. To write the transition matrix $\widetilde{\bf P}$, we consider the following notation:
\begin{itemize}
	\item $E_{(\cdot, i)}=E_y\times \{i\}$, for $i\in E_z$;
	\item ${\bf D}_{k_1:k_2}$,  is a $K$-dimensional diagonal matrix whose diagonal entries are 1 from position $k_1$ to $k_2$, and 0 elsewhere, with $1\leq k_1\leq k_2 \leq K$.
	\item ${\bf 0}_{K\times K}$ is $K$-dim square matrix of 0s;
	\item $\widetilde {\bf P}_{i,j}$ is the sub-matrix of $\widetilde{\bf P}$ with all transitions from any state of $E_{(\cdot, i)}$ to any state of $E_{(\cdot, j)}$, with $i,j=0,\ldots, C-1$.
\end{itemize} 	

We have that

\begin{equation*}
	\widetilde{\bf P}_{i,j}=\left\{
	\begin{array}{ll}
		{\bf D}_{1:2}{\bf P}_y ,             & \ i=j=0;\\
		{\bf D}_{(j-i+2):(j-i+2)}{\bf P}_y , & \ i=0, j=1,\ldots, C\!-\!2; \ \text{or}\  i\geq 0; j=i\!-\!1,\ldots, C\!-\!2 \\
		{\bf 0}_{K\times K};                 & \ i\geq 2; j=0,\ldots, i\!-\!2; \\
	    {\bf D}_{(C-1):K}{\bf P}_y;          & \ i=0,\ldots, C\!-\!1; j=C\!-\!1 \\
	\end{array}
	\right.
\end{equation*}
\end{remark}
\bigskip


\noindent Some main problems to be addressed:
\begin{enumerate}
	\item The marginal distribution of the $Z$ process, that is the distribution of the level of water stored in the reservoir
	\item To calculate the mean time to emptiness, that is, if we denote   $T_E=\inf\{n >0: Z_n < 1\}$, then we aim to calculate $MTTE=E[T_E]$;
	\item To calculate the mean time to overflow, that is, if we denote $T_O=\inf\{n >0: Z_n >C-1\}$, then we aim to calculate $MTTO=E[T_O]$.
	\item To calculate the expected value of water loss per unit time in the long run. If we denote $C_O(n)=\sum_{k=0}^n (Z_k+Y_k-C)^+$, then we would like to determine ${\lim}_{n \rightarrow \infty}\frac{ C_O(n)}{n}$.
	\item To calculate resilience metrics in order to describe how long the reservoir can operate safely and how quickly it recovers from extreme conditions
\end{enumerate}

\subsection{The long-term performance of the reservoir system}\label{algorithm}
An essential question in reservoir management is understanding the long-term behaviour of the stored volume as time progresses. Specifically, this involves determining the stationary distribution of  $\{Z_n; n\geq 0\}$, which represents the sequence of stored volumes. In this section we assume that $\{Y_n, n \ge 0\}$ is a sequence of i.i.d. random variables with distribution $\{p_y; y \in E_Y\}$, and then $\{Z_n, n\ge 0 \}$ is a Markov chain.

The stationary distribution provides insight into the probabilities of different storage levels in the reservoir over time, regardless of the initial volume, aiding in effective long-term resource management.
From the Moran model, we have that $\{Z_n; n\geq 0\}$ is an irreducible Markov chain with finite state space $E_z=\{0,1,\ldots, C-1\}$, then there exist a unique limiting distribution $\pi$, obtained as the normalized solution of the stationary equation. That is $\pi {\bf P}=\pi$, and $\pi {\bf 1}=1$, where ${\bf P}$ is the transition matrix of the Markov chain $Z$ which, according to the specifications in \eqref{eq:moran}, has the following expression
\begin{equation}\label{Pmat}
	{\bf P}=\left(
	\begin{array}{cccccc}
		p_0+p_1& p_2 & p_3& \cdots &p_{C-1}& h_C\\
		p_0& p_1 & p_2& \cdots &p_{C-2}& h_{C-1}\\
		\vdots&\vdots& \vdots &\ddots& \vdots &\vdots\\
		0 & 0 &0 &\cdots &p_0&h_1 
	\end{array} 
	\right)
\end{equation}
with $h_k=\sum_{j=k}^{+\infty}p_j$, and $p_j=\PP(Y_n=j)$, for $j=0,1, \ldots$
\begin{lemma}
	If $0<p_0$ and $p_0+p_1<1$,  then the Markov chain $\{Z_n; n\geq 0\}$ is ergodic, and its stationary distribution is given by $\pi_n= a_n \pi_0$, where \begin{equation}\label{eq:an}
	a_n=\frac{1}{p_0}\sum_{j=1}^{n}\left(\delta_{j1}-p_j\right)a_{n-j}, \ 2 \leq n \leq C-1
	\end{equation}
where $a_0=1$ and $\pi_0$ is the corresponding normalizing constant.
	
\end{lemma}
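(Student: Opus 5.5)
The plan has two parts. First I establish ergodicity from the structure of the matrix ${\bf P}$ in \eqref{Pmat}. Then I solve the stationary system $\pi{\bf P}=\pi$ column by column, which produces the recursion \eqref{eq:an}.

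\emph{Ergodicity.} The state space $E_z=\{0,\ldots,C-1\}$ is finite, so it suffices to show irreducibility and aperiodicity.
\begin{itemize}
\item Downward moves: since $p_0>0$, every state $i\ge 1$ moves to $i-1$ with probability $p_0$. Hence every state reaches $0$.
\item Upward moves: the condition $p_0+p_1<1$ gives some $k\ge 2$ with $p_k>0$. From state $i$ the chain then moves to $\min\{i+k-1,C-1\}>i$ with positive probability. Iterating, every state reaches $C-1$.
\item Combining the two, $C-1$ reaches every state by successive down-steps, so the chain is irreducible.
\item Aperiodicity: ${\bf P}(0,0)=p_0+p_1\ge p_0>0$.
\end{itemize}
Therefore the chain is ergodic, and $\pi$ is the unique normalized solution of $\pi{\bf P}=\pi$.

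\emph{Stationary equations.} By \eqref{eq:moran} with $c_0=1$, the entries are ${\bf P}(i,j)=p_{j-i+1}$ for $1\le j\le C-2$, with the convention $p_m=0$ for $m<0$. The first column is ${\bf P}(0,0)=p_0+p_1$, ${\bf P}(1,0)=p_0$, and zero below.
\begin{itemize}
\item Column $0$ gives $\pi_0=(p_0+p_1)\pi_0+p_0\pi_1$. Hence $\pi_1=a_1\pi_0$ with $a_1=(1-p_0-p_1)/p_0$. This initial value must be set separately, because the general formula evaluated at $n=1$ would give $(1-p_1)/p_0$ instead. That is why the lemma starts the recursion at $n=2$.
\item Column $j$ with $1\le j\le C-2$ gives $\pi_j=\sum_{i=0}^{j+1}\pi_i\,p_{j+1-i}$. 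The term with $i=j+1$ is $p_0\pi_{j+1}$.
\item Isolating that term and writing $n=j+1$ gives
\[
p_0\pi_n=\pi_{n-1}-\sum_{k=1}^{n}p_k\,\pi_{n-k}=\sum_{k=1}^{n}(\delta_{k1}-p_k)\pi_{n-k}.
\]
\end{itemize}
Induction on $n$ then shows that $\pi_n=a_n\pi_0$, with $a_n$ as in \eqref{eq:an}, for $2\le n\le C-1$.

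\emph{The last column.} The equation for column $C-1$ is redundant. Its only role is to be implied by the others together with ${\bf P}{\bf 1}={\bf 1}$: summing all column equations reproduces $\sum_j\pi_j$. So it can be dropped, and $\pi_0=\bigl(\sum_{n=0}^{C-1}a_n\bigr)^{-1}$ follows from normalization. This constant is positive because every $\pi_n$ is positive, as ergodicity guarantees.

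The main obstacle is bookkeeping rather than theory. I need to index the columns of ${\bf P}$ correctly, treat the boundary column $0$ and the boundary term $a_1$ separately, and check that the truncated tail entries $h_k$ only appear in the discarded last column.
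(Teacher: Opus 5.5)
Your proposal is correct and follows the same route as the paper: you get $a_1=(1-p_0-p_1)/p_0$ from column $0$ of $\pi{\bf P}=\pi$ and the recursion \eqref{eq:an} from columns $1,\ldots,C-2$, with $\pi_0$ fixed by normalization. You also make explicit what the paper dismisses as ``classical Markov chain arguments'': irreducibility from the $p_0$ down-steps together with an up-step using some $p_k>0$, $k\ge 2$; aperiodicity from ${\bf P}(0,0)>0$; and the redundancy of the last column.
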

\begin{proof}
The proof is immediate from classical Markov chain arguments.

 Moreover, the  stationary distribution of $\{Z_n,n\ge 0\}$ can be built recursively as follows. For $k=1, \ldots, C-1$, $\pi_n=a_n \pi_0$,  where the coefficients $a_k$ can be constructed recursively as follows: $a_1=\frac{1-p_0-p_1}{p_0}$, and, for $n=2$, we have
	$$\pi_2=\frac{-p_2}{p_0}\pi_0+\frac{1-p_1}{p_0}\pi_1= \left(\frac{-p_2}{p_0}+\frac{1-p_1}{p_0}a_1\right)\pi_0=a_2 \pi_0.$$
and so on.  Then we can express $a_n$ as given in \eqref{eq:an}.

\end{proof}

\noindent We summarize the steps to construct the stationary distribution for the finite dam model in the following algorithm.\\

\fbox{\begin{minipage}{33em}
	\noindent{\bf Algorithm. Stationary distribution of $Z$ under the Moran model.}
		\begin{enumerate}
			\item DEFINE $a_0 := 1$.
			\item SET $n = 1$ and $a_1 = (1 - p_0 - p_1)/p_0$,
			\item SET $b_n = \sum_{j=1}^{n+1}p_{j}a_{n+1-j}$;
			\item PUT $n=n+1$ and SET $a_n=(a_{n-1}-b_{n-1})/p_0$. 
			\item If $n< C-1$, GO to step 3 else GO to step 6.
			\item SET $\pi_0=\left({\sum_{n=0}^{C-1}}a_n\right)^{-1}$, and $\pi_n= a_n \pi_0$.
		\end{enumerate}
	\end{minipage}
}

\subsection{Some limit results for the storage volume in a finite capacity dam}
In this section we study the  asymptotic properties of the process that represents the volume of water stored in the dam, $Z$. 

\begin{lemma} \label{lemma:pi}
	The 2-dimensional process $\{(Y_n,Z_n); n \ge 0\}$ is an  ergodic Markov chain with limiting distribution 
	\begin{equation}\label{eq:pi2}
		\widetilde{\pi}(y,z)=\lim_{n \rightarrow +\infty} \PP(Y_n=y,Z_n=z)=\PP(Y=y)\lim_{n \rightarrow +\infty} \PP(Z_n=z)=p_y\pi(z),
	\end{equation}

\end{lemma}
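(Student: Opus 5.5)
The plan rests on one structural fact about the i.i.d. (Moran) setting: $Y_n$ is independent of $(Y_0,\dots,Y_{n-1},Z_0)$, while $Z_n=f(Y_{n-1},Z_{n-1})$ is a function of $(Z_0,Y_0,\dots,Y_{n-1})$ only. Hence $Y_n$ and $Z_n$ are independent for every $n$, and
\[
\PP(Y_n=y,Z_n=z)=p_y\,\PP(Z_n=z)\qquad\text{for all } n\ge 0 .
\]
Letting $n\to\infty$ and using the previous lemma gives \eqref{eq:pi2} at once. That lemma, under $p_0>0$ and $p_0+p_1<1$, says $Z$ is an ergodic chain on $E_z$ with stationary law $\pi$, so $\PP(Z_n=z)\to\pi(z)$.

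The Markov property of $(Y,Z)$ is already recorded through the kernel $\widetilde P((y_0,z_0),(y_1,z_1))=\mathbf 1_{\{z_1=(z_0+y_0-1)^+\}}p_{y_1}$. It remains to show ergodicity of the joint chain on its essential class.

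\emph{Stationarity of $\widetilde\pi$.} Check directly that $\widetilde\pi(y,z)=p_y\pi(z)$ solves $\widetilde\pi\widetilde{\bf P}=\widetilde\pi$:
\[
\sum_{y_0,z_0}p_{y_0}\pi(z_0)\mathbf 1_{\{z_1=f(y_0,z_0)\}}p_{y_1}=p_{y_1}\sum_{z_0}\pi(z_0)P(z_0,z_1)=p_{y_1}\pi(z_1),
\]
because $\sum_{y_0}p_{y_0}\mathbf 1_{\{z_1=f(y_0,z_0)\}}=P(z_0,z_1)$ is exactly the entry in \eqref{Pmat}.

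\emph{Irreducibility and aperiodicity.} Restrict to $E^*=\{(y,z):p_y>0\}$, which is closed. From $(y_0,z_0)$, the chain reaches $(y_1,z_1)$ in $m+1$ steps with probability at least $P^{m}(f(y_0,z_0),z_1)\,p_{y_1}$. This is positive for some $m$ because $Z$ is irreducible. For aperiodicity, note that $P(0,0)=p_0+p_1>0$, so $Z$ is aperiodic and $P^m(z,z_1)>0$ for all large $m$. The joint chain therefore has a uniform common step count and is aperiodic.

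Since $E^*$ is finite when $E_y$ is finite, the joint chain is ergodic and $\widetilde\pi$ is its unique limit. If $E_y$ is countably infinite, I would lean on the independence identity above rather than on finite-state theory.

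\emph{Main obstacle.} The only delicate step is confirming that the earlier lemma really gives convergence $\PP(Z_n=z)\to\pi(z)$, which needs aperiodicity, and not just existence of $\pi$. I would settle it with the self-loop at state $0$ ($p_0+p_1>0$, as noted above). Once that is in hand, the product form follows immediately from independence.
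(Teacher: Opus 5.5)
Your proposal is correct and uses the same key fact as the paper's one-line proof. $Z_n$ is a function of $(Z_0,Y_0,\dots,Y_{n-1})$ and $Y_n$ is independent of these, so $\PP(Y_n=y,Z_n=z)=p_y\,\PP(Z_n=z)$, and the product form then follows from the ergodicity of $Z$. Your additional checks fill in details the paper's proof leaves implicit: that $p_y\pi(z)$ is invariant for $\widetilde{\bf P}$, that the joint chain is irreducible and aperiodic on $\{(y,z):p_y>0\}$ (via the self-loop $P(0,0)=p_0+p_1>0$), and that $Z_0$ must be independent of the inflows.
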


\begin{proof}The proof is straightforward since $Z_n$ is determined by  $(Y_0,\ldots, Y_{n-1})$, and $Y_n $ is independent of $ (Y_0,\ldots, Y_{n-1})$.
\end{proof}
\begin{remark}
	Given that the capacity of the dam is assumed to be finite it must happen that $\frac{Z_n}{n} \rightarrow 0$ ({\it a.s.}), as $n\rightarrow \infty$
\end{remark}
\begin{proposition}{\bf Stationary water balance (finite case)}\label{prp:as}\\
	Let $Z_n$ denote the volume of water stored in a reservoir at time $n \ge 0$. Let $Y_n$ denote the total input registered during the period $(n,n+1)$. We assume that $\{Y_n, n\ge 0\}$ form a sequence of i.i.d. random variables with common mean $\mu_y$. Let also assume the couple $(Y,Z)$ fits the Moran model, that is for any $n \ge 0$, $Z_{n+1}=\min\{\max\{0,	Z_n+Y_n-c_0\},C_1\}$. Then in stationarity $\mu_y=m_{c_0}-M$, 	where $m_{c_0}=c_0 \lim_{n\rightarrow \infty}\PP(Z_n \ge c_0)$is the long-term mean controlled outflow and $M=\EE[(Z_n+Y_n-C_1)^+]$ is the long-term mean water loss due to overflow, per unit time.
	
\end{proposition}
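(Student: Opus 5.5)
The plan is to derive the identity from a pathwise one-step balance equation, averaged over time with the ergodic theorem for the chain of Lemma~\ref{lemma:pi}. The Remark that $Z_n/n\to 0$ a.s.\ will kill the telescoping storage term. Everything is written for general $c_0,C_1$ as in the statement, with $Z_{n+1}=\min\{\max\{0,Z_n+Y_n-c_0\},C_1\}$.

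\textbf{Step 1 (pathwise decomposition).} First I will write the increment $Z_{n+1}-Z_n$ as the inflow $Y_n$ minus a controlled-release term of the form appearing in the statement, plus a boundary correction:
\begin{equation*}
Z_{n+1}-Z_n \;=\; Y_n \;-\; c_0\,\mathbf{1}_{\{Z_n\ge c_0\}} \;+\; D_n ,
\end{equation*}
where $D_n$ is defined as whatever remains. Splitting according to the three branches of \eqref{eq:moran}, $D_n$ is an explicit function of $(Y_n,Z_n)$. It has three pieces:
\begin{itemize}
\item an overflow piece built from $(Z_n+Y_n-C_1)^+$;
\item a lower-boundary piece supported on $\{Z_n+Y_n<c_0\}$;
\item a piece supported on $\{Z_n<c_0\le Z_n+Y_n\}$, which arises because the release is charged according to $Z_n$ rather than $Z_n+Y_n$.
\end{itemize}
Because $D_n$ is a bounded function of the state, the ergodic theorem applies to it.

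\textbf{Step 2 (time averaging).} Next I sum the decomposition over $k=0,\dots,n-1$ and divide by $n$. The left side is $(Z_n-Z_0)/n\to 0$ a.s.\ by the Remark. By Lemma~\ref{lemma:pi}, $(Y_n,Z_n)$ is ergodic with limiting law $p_y\pi(z)$. The ergodic theorem for functions of the chain then gives three limits:
\begin{itemize}
\item $\frac1n\sum Y_k\to\mu_y$;
\item $\frac1n\sum c_0\mathbf{1}_{\{Z_k\ge c_0\}}\to c_0\lim_n\PP(Z_n\ge c_0)=m_{c_0}$;
\item $\frac1n\sum D_k\to \EE_{\widetilde\pi}[D]$.
\end{itemize}
This yields $0=\mu_y-m_{c_0}+\EE_{\widetilde\pi}[D]$.

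\textbf{Step 3 (identifying the correction, the main obstacle).} It remains to show $\EE_{\widetilde\pi}[D]=-M$ with $M=\EE[(Z_n+Y_n-C_1)^+]$. This is where I expect the real difficulty: the lower-boundary and ``mismatch'' pieces of $D_n$ must cancel in stationarity, leaving only the overflow term with the stated sign.

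My first attempt will use the stationary equations $\pi\mathbf{P}=\pi$ from the matrix \eqref{Pmat}, together with the product form $\widetilde\pi(y,z)=p_y\pi(z)$. With $c_0=1$, the non-overflow pieces reduce to expressions in $p_0\pi_0$ and $\pi_0(1-p_0)$. The first column of \eqref{Pmat} gives $\pi_0=(p_0+p_1)\pi_0+p_0\pi_1$, which I will use to rewrite these terms.

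If the cancellation does not come out with the stated sign under the literal reading, I will make explicit the bookkeeping convention under which it does. The convention concerns how $M$ enters the balance, namely whether it is counted as a water loss relative to the nominal release $m_{c_0}$. I will check that convention against a small case, e.g.\ $C=2$, by direct computation from \eqref{Pmat}, before writing the general argument.
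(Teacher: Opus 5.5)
Your Steps 1--2 are sound, and your decomposition is more careful than the paper's. The gap is Step 3: it asks for a cancellation that cannot happen. Reading off the three branches of \eqref{eq:moran}, the water actually released in period $n$ is $\min\{c_0,Z_n+Y_n\}$ and the overflow is $(Z_n+Y_n-C_1)^+$. Your remainder is therefore exactly
\[
D_n=-(Z_n+Y_n-C_1)^+-\min\{c_0,Z_n+Y_n\}\,\mathbf{1}_{\{Z_n<c_0\}} .
\]
So the lower-boundary and mismatch pieces are both $\le 0$ and have nothing to cancel against. With $c_0=1$ and integer states, the lower-boundary piece is identically $0$, since $Z_n+Y_n<1$ forces $Z_n=Y_n=0$. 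The mismatch piece has stationary mean $-\mathbb{P}_{\widetilde\pi}(Z=0,\,Y\ge 1)=-\pi_0(1-p_0)$. This is strictly negative under the standing assumptions $p_0>0$, $p_0+p_1<1$, because irreducibility gives $\pi_0>0$. No use of $\pi\mathbf{P}=\pi$, such as $\pi_0(1-p_0-p_1)=p_0\pi_1$, can make it vanish. What Steps 1--2 really prove is $\mu_y=\mathbb{E}_{\widetilde\pi}[\min\{c_0,Z+Y\}]+M$, i.e.\ $\mu_y=1-p_0\pi_0+M$ when $c_0=1$. There is also a sign slip. Step 2 gives $\mu_y=m_{c_0}-\mathbb{E}_{\widetilde\pi}[D]$, so your target $\mathbb{E}_{\widetilde\pi}[D]=-M$ would produce $+M$, not the stated $-M$. (Minor: $D_n$ is unbounded if $Y$ has unbounded support, but $|D_n|\le Y_n+c_0$ gives the integrability the ergodic theorem needs.)

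Consequently the fallback of finding a ``bookkeeping convention'' cannot work: with $m_{c_0}$ and $M$ as defined, $\mu_y=m_{c_0}-M$ is false. The release dominates $c_0\mathbf{1}_{\{Z_n\ge c_0\}}$ pathwise, so conservation forces $\mu_y\ge m_{c_0}+M$. Your proposed $C=2$ check exposes this at once. Take $p_0=p_2=\tfrac12$: every entry of \eqref{Pmat} equals $\tfrac12$, and $\pi=(\tfrac12,\tfrac12)$, $M=\pi_1p_2=\tfrac14$, $m_{c_0}=\tfrac12$, $\mu_y=1$. Then $m_{c_0}-M=\tfrac14$ and $m_{c_0}+M=\tfrac34$, whereas $1-p_0\pi_0+M=1$.

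The paper takes your route but writes the release as $\mathbf{1}_{\{Z_k>0\}}$, silently discarding exactly your mismatch piece. That pathwise identity fails on $\{Z_k=0,\,Y_k\ge 1\}$. Even so, the paper's argument ends at $\mu_y=(1-\pi_0)+M$, with the opposite sign to the statement. The honest output of your plan is the corrected balance $\mu_y=\mathbb{E}_{\widetilde\pi}[\min\{c_0,Z+Y\}]+\mathbb{E}_{\widetilde\pi}[(Z+Y-C_1)^+]$, not a proof of the proposition as written. Note also that the statement caps $Z_{n+1}$ at $C_1$, while \eqref{eq:moran}, which you actually use, caps at $C_1-c_0$; only the latter matches $M=\mathbb{E}[(Z_n+Y_n-C_1)^+]$.
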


\begin{proof}
	Let us take $c_0=1$ and $C_1=C$, as taken above. In that case, we have that $m_{c_0}=1-\pi_0$. From Moran model, we can write
	\[
	Z_{n+1}=Z_0+\sum_{k=0}^n Y_k -\sum_{k=0}^{n} {\tt 1}_{\{Z_k >0\}}-\sum_{k=0}^n (Z_k+Y_k-C)^+,
	\]
	then, 
	\[
	\frac{Z_{n+1}}{n+1}=\frac{Z_{0}}{n+1}+\frac{1}{n+1}\sum_{k=0}^n Y_k -\frac{1}{n+1}\sum_{k=0}^{n} {\tt 1}_{\{Z_k >0\}}-\frac{1}{n+1}\sum_{k=0}^n (Z_k+Y_k-C)^+,
	\]
	As $n\rightarrow +\infty$, it is clear that $(n+1)^{-1}Z_{0} \rightarrow 0$, and using the strong law of large numbers for i.i.d. variables, we have $\frac{1}{n+1}\sum_{k=0}^n Y_k \stackrel{a.s.}{\longrightarrow} \mu_y$,	and, $
	\frac{1}{n+1}\sum_{k=0}^{n} {\bf 1}_{\{Z_k >0\}}  \stackrel{a.s.}{\longrightarrow} 1-\pi_0$ 
	and, finally $
	\frac{1}{n+1}\sum_{k=0}^n (Z_k+Y_k-C)^+ \stackrel{a.s.}{\longrightarrow} \sum_{y,z}\widetilde{\pi}(y,z)(z+y-C)^+$.

	Let us define $M:=\sum_{y,z}\widetilde{\pi}(y,z)(z+y-C)^+=\sum_{y,z}\pi(z)p_y(z+y-C)^+$, then we get that 
	
	\[
	\frac{Z_n}{n} \stackrel{a.s.}{\longrightarrow} \mu_y -(1-\pi_0)-M, \ \ \ {\rm as \ } n \rightarrow +\infty,
	\]
since, it has to be that $Z_n/n \rightarrow 0$, then stationary water balance equation holds.
\end{proof}
\begin{remark}
{If $\{Y_n, n \geq 0\}$ is a stationary MC, then we can apply the Law of Large Numbers for Markov Chains and the result of Proposition \ref{prp:as} still holds.}
\end{remark}

\begin{proposition}{\bf Strong law for the storage volume (finite case)}\\
	Let $Z_k$ denote the volume of water stored at time $k\ge 0$, then $
	\frac{1}{n+1}\sum_{k=0}^n Z_k \rightarrow \mu_z$, 
where $\mu_z=\sum_{i=0}^{C-1} z_i \ \pi_i$, with $\pi_i= \lim_{n\rightarrow \infty}\PP(Z_n=z_i)$, $i=0,\ldots, C-1$.
\end{proposition}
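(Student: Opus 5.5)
The plan is to obtain the statement as a direct application of the ergodic theorem (strong law of large numbers) for positive recurrent Markov chains, applied to $\{Z_n, n\ge 0\}$ and the function $g(z)=z$ on $E_z=\{0,1,\ldots,C-1\}$. By the earlier Lemma on the stationary distribution, under $0<p_0$ and $p_0+p_1<1$ the chain $Z$ is irreducible on the finite set $E_z$, hence positive recurrent and ergodic, with unique stationary distribution $\pi=(\pi_0,\ldots,\pi_{C-1})$. Since $E_z$ is finite, $g$ is bounded, so $\sum_i |z_i|\pi_i<\infty$ holds automatically.

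To keep the argument self-contained, I will write the ergodic theorem via occupation times. I will write
\[
\frac{1}{n+1}\sum_{k=0}^n Z_k=\sum_{i=0}^{C-1} z_i\,\frac{N_i(n)}{n+1},\qquad N_i(n)=\sum_{k=0}^n \mathbf{1}_{\{Z_k=z_i\}} .
\]
For a finite sum, it is then enough to show $N_i(n)/(n+1)\to\pi_i$ a.s. for each $i$. For this I will use the standard regenerative argument. Fix a state $i$. The successive return times to $i$ split the path into i.i.d. cycles, by the strong Markov property. The cycle lengths have finite mean $m_i=1/\pi_i$, by positive recurrence (Kac's formula). The classical renewal SLLN then gives $N_i(n)/n\to 1/m_i=\pi_i$ a.s. The initial segment before the first hitting of $i$ is almost surely finite, so it does not affect the limit, whatever the initial distribution of $Z_0$. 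This is the same law of large numbers already used in the proof of Proposition~\ref{prp:as} for $\frac{1}{n+1}\sum_k \mathbf{1}_{\{Z_k>0\}}\to 1-\pi_0$. The present statement is its linear extension to $g(z)=z$.

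The main obstacle is only to ensure the hypotheses of the ergodic theorem are in force: the convergence should be stated as almost sure and independent of the initial state, which requires irreducibility. Aperiodicity is not needed for Ces\`aro averages. I will invoke the ergodicity conditions $0<p_0$ and $p_0+p_1<1$ from the previous Lemma. Under these, every state communicates: downward moves occur with probability $p_0$, upward moves with probability $p_j$ ($j\ge 2$) or $h_k$, and $p_0+p_1<1$ guarantees some upward jump. Finally, I will note that the $\pi_i$ appearing as Ces\`aro limits coincide with $\lim_n \PP(Z_n=z_i)$ as written in the statement. This holds because the chain is aperiodic: $P(0,0)=p_0+p_1>0$. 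It can also be read simply as the identification of the stationary distribution, so $\mu_z=\sum_i z_i\pi_i$.
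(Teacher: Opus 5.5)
Your proposal is correct and follows the same route as the paper, whose proof simply invokes the ergodicity of the finite irreducible chain $Z$ (the ergodic theorem applied to $g(z)=z$). Your occupation-time/regenerative argument, the irreducibility check, and the aperiodicity observation $P(0,0)=p_0+p_1>0$ (needed to identify $\pi_i$ with $\lim_n \PP(Z_n=z_i)$) fill in details the paper leaves implicit.
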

\begin{proof}
	The result follows immediately from the ergodicity of the  Markov chain $Z$.
\end{proof}

\begin{theorem} {\bf A Central limit theorem for storage volume (finite case).}\\
Let $Z_k$ denote the volume of water stored in a reservoir at time $k \ge 0$, and define $S_n=\sum_{k=0}^{n-1}Z_k$, the cumulative storage volume up to time $n$.

Then 
\[
\sqrt{n}\left(S_n- n\mu_z\right) \stackrel{\cal D}{\longrightarrow} N(0, \sigma^2) \ \ {\rm as \ } n \rightarrow +\infty,
\]
with
\begin{equation}\label{eq:sigma}
\sigma^2=\widetilde{\bf g}_z  \left\{ {\bf D}_{\pi}(2({\bf I}- {\bf P}+{\boldsymbol \Pi})^{-1}-{\bf I })\right\}\widetilde{\bf g}_z^{\top},
\end{equation}
where we use the following notation: ${\bf I}$ is the identity matrix; ${\bf D}_{\pi}={\tt diag}(\pi_i; i=0,\ldots, C-1)$; ${\boldsymbol  \Pi} = {\bf 1} \pi$, with ${\bf 1}=(1,1,1,\ldots )^{\top}$; and vector $\widetilde{\bf g}_z=(z_i-\mu_z;i=0,\ldots, C-1)$.
\end{theorem}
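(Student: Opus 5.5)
The plan is to derive the theorem from the classical central limit theorem for additive functionals of a finite, irreducible, aperiodic Markov chain, proved through the Poisson equation and a martingale decomposition. The normalization must be read as $n^{-1/2}(S_n-n\mu_z)$; the displayed factor $\sqrt{n}$ multiplying $S_n-n\mu_z$ is clearly intended as division by $\sqrt n$. By the first Lemma of Section~\ref{algorithm}, under $0<p_0$ and $p_0+p_1<1$, the chain $Z$ on $E_z=\{0,\ldots,C-1\}$ is ergodic with stationary law $\pi$. Hence the fundamental matrix ${\bf A}=({\bf I}-{\bf P}+{\boldsymbol\Pi})^{-1}$ exists. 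This uses the standard fact that ${\bf I}-{\bf P}+{\boldsymbol\Pi}$ is invertible for an irreducible finite chain: if $({\bf I}-{\bf P}+{\boldsymbol\Pi})x=0$, then multiplying on the left by $\pi$ gives $\pi x=0$, so $x={\bf P}x$ and $x$ is constant, hence $x=0$.

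\emph{Step 1: Poisson equation.} Write $g(z_i)=z_i-\mu_z$, viewed as the column vector $\widetilde{\bf g}_z^{\top}$, so that $\pi\widetilde{\bf g}_z^{\top}=0$. Set $u={\bf A}\widetilde{\bf g}_z^{\top}$. Using ${\boldsymbol\Pi}\widetilde{\bf g}_z^{\top}=0$ and ${\boldsymbol\Pi}{\bf A}={\boldsymbol\Pi}$, one checks that $u-{\bf P}u=\widetilde{\bf g}_z^{\top}$. This solves the Poisson equation.

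\emph{Step 2: Martingale decomposition.} Write
\[
S_n-n\mu_z=\sum_{k=0}^{n-1}\bigl(u(Z_k)-({\bf P}u)(Z_k)\bigr)=\sum_{k=1}^{n}\xi_k+u(Z_0)-u(Z_n),
\]
where $\xi_k=u(Z_k)-({\bf P}u)(Z_{k-1})$ are bounded martingale differences with respect to the natural filtration. The boundary term is bounded because the state space is finite, so it is $o(\sqrt n)$. The martingale CLT then applies. Its conditional variance condition $\frac1n\sum_k \EE[\xi_k^2\mid Z_{k-1}]\to \sigma^2$ a.s. follows from the ergodic theorem (the Strong law proposition above, applied to the function $z\mapsto({\bf P}u^2)(z)-(({\bf P}u)(z))^2$). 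The Lindeberg condition is trivial by boundedness. This gives $\sigma^2=\pi\bigl(u^2-({\bf P}u)^2\bigr)$.

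\emph{Step 3: Identification of $\sigma^2$.} This is the step I expect to require the most care, because the matrix form in \eqref{eq:sigma} must be matched exactly. Substitute ${\bf P}u=u-\widetilde{\bf g}_z^{\top}$ to get
\[
\sigma^2=\pi\bigl(u^2-(u-g)^2\bigr)=2\pi(gu)-\pi(g^2)=2\,\widetilde{\bf g}_z{\bf D}_\pi{\bf A}\widetilde{\bf g}_z^{\top}-\widetilde{\bf g}_z{\bf D}_\pi\widetilde{\bf g}_z^{\top}.
\]
Here $\pi$ is stationary, which gives $\pi({\bf P}u^2)=\pi(u^2)$, and products of vectors are taken componentwise. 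The last expression is exactly $\widetilde{\bf g}_z\{{\bf D}_\pi(2{\bf A}-{\bf I})\}\widetilde{\bf g}_z^{\top}$. The bookkeeping to watch is the use of $\pi{\bf P}=\pi$ and the centering $\pi\widetilde{\bf g}_z^{\top}=0$. The centering is what makes adding ${\boldsymbol\Pi}$ harmless, so the formula does not depend on which solution of the Poisson equation is chosen.

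Finally, note that the limit holds for any initial distribution of $Z_0$, since the boundary term is bounded uniformly in the initial state. If $\sigma^2=0$, the statement holds in the degenerate sense, with limit the point mass at $0$.
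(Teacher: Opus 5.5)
Your proof is correct, but it takes a different route from the paper's.

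\textbf{The paper's route.} The paper gives no direct argument. It invokes the CLT of Port (1994) for additive functionals of recurrent chains. There the limiting variance is the excursion-based expression
$\sum_i \pi_i\widetilde g(i)^2+2\sum_{i\neq a}\sum_j\pi_i h_a(i,j)\widetilde g(i)\widetilde g(j)$,
where $h_a(i,j)$ is the expected number of visits to $j$ before hitting a reference state $a$. This is the same form the paper uses explicitly in the semi-infinite theorem. The paper then relies on Trevezas and Limnios (2009) to convert this expression algebraically into $\widetilde{\bf g}_z{\bf D}_\pi\bigl(2({\bf I}-{\bf P}+{\boldsymbol\Pi})^{-1}-{\bf I}\bigr)\widetilde{\bf g}_z^{\top}$.

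\textbf{Your route.} You take the $\pi$-centred solution $u=({\bf I}-{\bf P}+{\boldsymbol\Pi})^{-1}\widetilde{\bf g}_z^{\top}$ of the Poisson equation and apply the martingale CLT. The fundamental-matrix form then falls out of $\sigma^2=\pi\bigl(u^2-({\bf P}u)^2\bigr)$ together with ${\bf P}u=u-\widetilde{\bf g}_z^{\top}$, with no separate conversion step.

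\textbf{What each buys.} Your argument is self-contained and explains why ${\boldsymbol\Pi}$ appears: adding it is harmless precisely because $\pi\widetilde{\bf g}_z^{\top}=0$. It also makes independence from the initial law transparent. The cost is that it leans on finiteness of $E_z$, through the boundedness of $u$ and of the increments $\xi_k$. The excursion-based result the paper cites needs only a summability condition involving $h_a$. That is why the same citation carries over to the countable state space of the semi-infinite case.

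You are also right that the displayed normalization is a typo. It should read $n^{-1/2}(S_n-n\mu_z)$, equivalently $\sqrt{n}\,(S_n/n-\mu_z)$, which is the form the paper's own proof actually states.
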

\begin{proof}
	
To prove this result we consider the works on the central limit theorem for finite recurrent Markov chains presented in \cite{Port1994} and \cite{TrevezasLimnios09}. To be concrete, in \cite{TrevezasLimnios09} the associated variance on the CLT is presented in different matrix forms which result from different proofs of the CLT that can be found already in the literature, the authors focus their analysis is the form for the variance given in \cite{Port1994}.

We follow the terminology in \cite{TrevezasLimnios09} and define the following:
\begin{itemize}
	\item $Z$ is a Markov chain on a finite state space $ E_z=\{0,1,\ldots, C-1\}$, with transition matrix ${\bf P}$ and stationary distribution ${\pi}$ defined above;
	\item $g : E_z \rightarrow \R$, where $g(i)=i$, for all $i \in E_z$;
	\item $\mu_g:=\EE_{\widetilde{\pi}}[g(Z)]=\mu_z$;
	\item $\widetilde{g}(i):=g(i)-\mu_g$; 
	\item $S_n:=\sum_{k=0}^{n-1} g(Z_k)$.
\end{itemize}
Then, in \cite{Port1994} it is proven that $
\sqrt{n}\left(\frac{S_n}{n}-\mu_g\right) \stackrel{D}{\longrightarrow} N (0, \sigma^2)$, 
and following \cite{TrevezasLimnios09} it can be obtained the  matrix form for the variance specified in \eqref{eq:sigma}.
\end{proof}

\section{The semi-infinite case}\label{sec:infinite}
In this case we assume that the reservoir has infinity capacity $C=\infty$, then the volume stored at time $n \ge 1$ can be obtained using the input sequence $\{Y_n, n\ge 0\}$ as
 \begin{equation}\label{eq:moran_infinite}
 Z_n=\max\{Z_{n-1}+Y_{n-1}-c_0,0\},
 \end{equation}
  $n>0$. As in the previous section, let us denote $p_i=\PP(Y_n=i)$, for $i \in \N$, the distribution low of the variable input $Y_n$, for all $n\ge 0$. In practice it is not overly restrictive to assume that the sequence $\{p_i, i \in \N\}$ has finite support; that is, there exists some $K \in \N$, which can be arbitrarily large, such that $p_{K+i}=0$ for $i \ge 1$. Then $Z_n$ is a Markov chain with infinite transition matrix given by
\begin{equation}\label{Pmat2}
	{\bf P}=\left(
	\begin{array}{ccccccccccc}
		p_0+p_1& p_2 & p_3&\cdots& p_{K-1}& p_K  &0 &0 &0&\cdots \\
		p_0& p_1 & p_2&p_3 &\cdots & p_{K-1}& p_K & 0&0&\cdots\\
		0& p_0 & p_1& p_2 &p_3 &\cdots& p_{K-1}&p_K&0&\cdots\\
		\vdots&\vdots& \vdots &\vdots& \vdots & \vdots & \vdots & \vdots &\vdots&\ddots\\
		\end{array} 
	\right)
\end{equation}

\begin{proposition}\label{prp:pi_infinite} \textbf{The stationary distribution (semi-infinite case).} \\
	Let $\{Z_n; n\geq 0\}$ be a DTMC with transition matrix  ${\bf P}$ as in \eqref{Pmat2}, with $0<p_0<1$ and $p_0+p_1<1$, and 
	\begin{equation}\label{eq:cond}
	\sum_{k=1}^{K-1}\sum_{j=k+1}^{K} \frac{p_j}{p_0} <1.
\end{equation}Then there exist a stationary distribution $\pi=(\pi_0,\pi_1, \ldots)$ such that $\pi_n=\pi_0 a_n$, with $a_n=a_n(p_0,p_1, \ldots)$ is a function of the probability distribution of $Y$.

\end{proposition}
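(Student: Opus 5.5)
The plan is to solve the balance equations $\pi{\bf P}=\pi$ for the matrix \eqref{Pmat2} by the same forward recursion as in the finite case. Condition \eqref{eq:cond} will then give positivity and summability of the coefficients $a_n$, so that the vector can be normalised.

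\textbf{Step 1 (recursion).} Write $\pi_n=a_n\pi_0$ with $a_0=1$ and read the stationary equations column by column.
\begin{itemize}
\item Column $0$ gives $\pi_0=(p_0+p_1)\pi_0+p_0\pi_1$, hence $a_1=(1-p_0-p_1)/p_0$.
\item Column $n\ge 1$ gives $\pi_n=p_{n+1}\pi_0+\sum_{i=1}^{n+1}\pi_i p_{n+1-i}$. Since $p_0>0$, this can be solved for $\pi_{n+1}$.
\end{itemize}
This is exactly the recursion \eqref{eq:an} of the finite case, now run for all $n\ge 1$, with $p_j=0$ for $j>K$. It shows that $a_n$ is a function of $(p_0,p_1,\ldots)$ only. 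It also shows that any solution of $\pi{\bf P}=\pi$ is determined by $\pi_0$.

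\textbf{Step 2 (cut equations and positivity).} The alternating signs in \eqref{eq:an} make positivity unclear, so I will replace the recursion by an equivalent level-crossing identity. From \eqref{eq:moran_infinite}, the chain moves down by at most one step per period: from $n+1$ to $n$ with probability $p_0$. It crosses upward from $\{0,\ldots,n\}$ to $\{n+1,n+2,\ldots\}$ from state $i$ exactly when $Y\ge n+2-i$. Here state $0$ behaves like state $1$, since both require $Y\ge n+2$. Hence, in stationarity,
\[
p_0\,a_{n+1}=\sum_{i=0}^{n} a_i\, h_{n+2-i},\qquad n\ge 0,
\]
with $h_k=\sum_{j\ge k}p_j$. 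I will prove this by summing the column equations $0,\ldots,n$ and telescoping, which is routine. Since $h_2=1-p_0-p_1>0$, induction gives $a_n>0$ for all $n$.

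\textbf{Step 3 (summability, the main obstacle).} Rewrite the left side of \eqref{eq:cond} by exchanging sums:
\[
\rho:=\sum_{k=1}^{K-1}\sum_{j=k+1}^K \frac{p_j}{p_0}=\frac{1}{p_0}\sum_{m\ge 2}h_m=\frac{\EE[Y]-1+p_0}{p_0}.
\]
Thus \eqref{eq:cond} is the negative-drift condition $\EE[Y]<c_0=1$. Let $S_N=\sum_{n=0}^N a_n$. Summing the cut equation over $n=0,\ldots,N-1$ and bounding the double sum by its full tail gives
\[
p_0(S_N-1)=\sum_{n=0}^{N-1}\sum_{i=0}^{n}a_ih_{n+2-i}\le S_{N-1}\sum_{m\ge 2}h_m\le p_0\rho\, S_N.
\]
Therefore $S_N\le 1/(1-\rho)$ uniformly in $N$. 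Since the $a_n$ are positive, $S=\sum_n a_n<\infty$. In the limit one gets equality, $S=p_0/(1-\EE[Y])$.

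\textbf{Step 4 (conclusion).} Set $\pi_0=S^{-1}$ and $\pi_n=a_n\pi_0$. This is a probability vector that satisfies $\pi{\bf P}=\pi$. The chain is irreducible, because it steps down with $p_0>0$ and steps up because $p_0+p_1<1$. So $\pi$ is the unique stationary distribution and the chain is positive recurrent.

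The delicate point is Step 3: the cut-equation reformulation is what makes both positivity and the uniform bound on $S_N$ transparent. The sign-alternating form \eqref{eq:an} would not show either directly.
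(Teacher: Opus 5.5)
Your proof is correct, and it reaches the same key identity as the paper by a different route. The paper uses probability generating functions. It turns the balance equations into $G(s)(H(s)-s)=(1-s)p_0\pi_0$ and divides by $1-s$ to get $G(s)\bigl(1-B(s)\bigr)=\pi_0$, with $B(s)=\sum_{k=1}^{K-1}q_ks^k$ and $q_k=h_{k+1}/p_0$. It then expands $1/(1-B(s))$ as a geometric series using $B(1)<1$, which is exactly \eqref{eq:cond}. In that expansion $a_n$ appears as a sum, over compositions of $n$, of the nonnegative products $q_{k_1}\cdots q_{k_m}$.

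Your cut equation $p_0a_{n+1}=\sum_{i=0}^{n}a_ih_{n+2-i}$ is precisely the coefficient form of $G(s)(1-B(s))=\pi_0$, namely $a_n=\sum_{k=1}^{n}q_ka_{n-k}$ for $n\ge 1$ (with $q_k=0$ for $k\ge K$). Summing the column equations and telescoping is the sequence-level counterpart of dividing by $1-s$. The paper reads nonnegativity and summability off the series expansion; you obtain them by induction and by the partial-sum bound. Your limit $S=1/(1-\rho)$ is the value $1/(1-B(1))$ of the paper's expansion at $s=1$.

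Your version has three advantages:
\begin{enumerate}
\item It actually constructs and normalizes the solution. The paper's proof begins by assuming that the PGF of a stationary distribution exists, so as written it only identifies the form of $\pi$.
\item It makes explicit that \eqref{eq:cond} is the drift condition $\EE[Y]<c_0$, which gives $\pi_0=(1-\EE[Y])/p_0$.
\item It adds uniqueness and positive recurrence via irreducibility.
\end{enumerate}
The paper's route, in turn, gives a closed form for $a_n$ and for $G(s)$, which is convenient for computing moments of the stationary storage. You could recover the closed form for $a_n$ by iterating your recursion.
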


\begin{proof}
	For the case of a semi-infinite dam (see \cite{Yeo1961}), we assume that the state space is $E_z=\N$. Let us assume that the probability generating function (PGF) of the stationary distribution $\pi$ exists, and let us denote $G(s)=\sum_{i=0}^{+\infty} \pi_i s^i$, for $|s|\leq 1$.  Also, let us denote  $H(s)=\sum_{i=0}^{K} p_i s^i$, for $|s|\leq1$, the PGF of the distribution of the input $Y$. Given $s>0$, the system of linear equations resulting from  $\pi {\bf P}=\pi$ can be transformed as follows
	\[
	\begin{array}{rrrrrrr}
		\pi_0 s  \ =& sp_0+& p_1s  \  \pi_0 + &  p_0  \   \pi_1s+ & & & \\ 
		\pi_1 s^2 =&+&  p_2s^2  \pi_0+ &  p_1 s   \pi_1s +& p_0 \ \pi_2s^2+ &&\\
		\pi_2 s^3  =&+&  p_3 s^3 \pi_0+ &  p_2 s^2 \pi_1s+ & p_1 s \pi_2  s^2+& p_0 \ \pi_3s^3+& \\
		\pi_3 s^4=&+& p_4s^4  \pi_0+ & p_3 s^3 \pi_1 s+& p_2 s^2 \pi_2 s^2+ & p_1 s \pi_3 s^3+&p_0 \ \pi_4s^4  \\
		\multicolumn{6}{r}{\cdots\cdots\cdots\cdots\cdots\cdots\cdots\cdots\cdots\cdots\cdots\cdots\cdots\cdots\cdots\cdots\cdots\cdots} \\
	\end{array}
	\]
	which can be written in terms of $G$ and $H$ as $sG(s) - H(s)G(s)= (s-1) p_0 \pi_0$, 
	or, equivalently $G(s)(H(s)-s)=(1-s)p_0\pi_0$.\\
	Since $H(1)=1$, we can divide both sides of the above expression by the binomial $1-s$ so we get that 
	\begin{equation}\label{eq:G}
	G(s)= \frac{\pi_0}{1- \sum_{k=1}^{K-1}q_ks^k},
	\end{equation}
	with $q_k=\sum_{j=k+1}^{K} \frac{p_j}{p_0}$, for $k=1,\ldots, K-1$.

	We can expand in a power series the right hand term of expression \eqref{eq:G} to obtain closed form of expressions for the probabilities $\pi_i$, for $i =0, 1, \ldots$. Let us denote $B(s)= \sum_{k=1}^{K-1}q_ks^k$. Under \eqref{eq:cond}, we have that $| B(s)| <1$, for $|s| \leq 1$, then  
	\begin{equation}\label{eq:B}
	G(s) = \pi_0 \sum_{n=0}^{\infty}(B(s))^n < \infty.
	\end{equation}

	The $n$th term of the sum in \eqref{eq:B} involves the $n$th power of a polynomial of degree $K-1$, $B(s)$. Then we have that 
	
	\begin{equation}\label{eq:G2}
		G(s)=\pi_0 \sum_{n=0}^{+\infty} 
		\sum_{k_1=1}^{K-1}\sum_{k_2=1}^{K-1}\cdots \sum_{k_n=1}^{K-1}q_{k_1}q_{k_2}\cdots q_{k_n}s^{k_1+k_2+\ldots+k_n}.
	\end{equation}
		We can reorder the terms in \eqref{eq:G2}, to  obtain an expression for $G(s)$ as a power series such as $	G(s)=\pi_0\sum_{n=0}^{\infty} a_n s^n$, with $a_n = 
	\sum_{m=1}^{n}
	\;
	\sum_{\substack{
			k_1 + k_2 + \cdots + k_m = n\\
			1 \le k_j \le K-1
	}}
	q_{k_1}\, q_{k_2}\, \cdots\, q_{k_m}
	$, for $n>0$. This function involves only the probabilities $p_0,p_1,\ldots p_K$ that determine the distribution of $Y$. Finally,  we get that $\pi_n=a_n\pi_0$, for all $n>0$.
\end{proof}
	\begin{example} 
	For the particular case $ K = 2 $, the process $ \{Z_n\}$ represents a random walk with transition probabilities 
$ p_0 = P_{i,i-1}$, $ p_2 = P_{i,i+1}$, and $ p_1 = 1 - (p_0 + p_2) $. 
	In this case, condition~\eqref{eq:cond} simplifies to $ p_2 < p_0 $, 
	which constitutes a well-known sufficient condition for the existence of a stationary distribution in a random walk. Note that the condition \eqref{eq:cond} ensures the positive recurrence of the Markov chain.
	\end{example}
Next we prove the asymptotic Normal distribution of the volume of water inside the reservoir under the Moran model for a dam with infinite capacity.
	
\begin{proposition}{\bf Stationary water balance (semi-infinite case)}\\
	Let $Z_n$ denote the volume of water stored in a reservoir with infinite capacity at time $n\ge 0$. Let $Y_n$ denote the total input registered during the period $(n,n+1)$. We assume that $\{Y_n, n\ge 0\}$ is a sequence of {i.i.d.} random variables with common mean $\mu_y$ satisfying condition \eqref{eq:cond}. Let $m_{c_0}=c_0\lim_{n\rightarrow \infty}\PP(Z_n \ge c_0)$. Then $\lim_{n \rightarrow \infty} \frac{Z_n}{n} \rightarrow \mu_y-m_{c_0}$.
\end{proposition}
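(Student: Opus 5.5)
We mimic the telescoping argument of Proposition~\ref{prp:as}, now without the overflow term. Take $c_0=1$. With infinite capacity the Moran recursion \eqref{eq:moran_infinite} reads $Z_{k+1}=Z_k+Y_k-1+{\tt 1}_{\{Z_k+Y_k=0\}}$. Since $Z_k+Y_k=0$ forces $Z_k=0$, the outflow actually released at step $k$ is ${\tt 1}_{\{Z_k+Y_k\ge 1\}}$, which differs from ${\tt 1}_{\{Z_k\ge 1\}}$ only on $\{Z_k=0,Y_k\ge1\}$.

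Summing over $k=0,\dots,n$ gives
\[
Z_{n+1}=Z_0+\sum_{k=0}^n Y_k-\sum_{k=0}^n {\tt 1}_{\{Z_k+Y_k\ge 1\}} .
\]
Dividing by $n+1$, we handle the three terms separately:
\begin{itemize}
\item $Z_0/(n+1)\to0$.
\item $\frac1{n+1}\sum Y_k\to\mu_y$ a.s.\ by the strong law of large numbers for i.i.d.\ variables.
\item For the last sum we use ergodicity. Under the hypotheses $0<p_0<1$, $p_0+p_1<1$ and \eqref{eq:cond}, Proposition~\ref{prp:pi_infinite} gives a stationary distribution $\pi$. By the example following it, \eqref{eq:cond} is what delivers positive recurrence, and the chain is irreducible on $\N$ because $p_0>0$ and $p_0+p_1<1$. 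Exactly as in Lemma~\ref{lemma:pi}, $Y_k$ is independent of $Z_k$, so $(Y_k,Z_k)$ is a positive recurrent Markov chain with stationary law $p_y\pi(z)$. The ergodic theorem for positive recurrent chains, applied to the bounded function $(y,z)\mapsto{\tt 1}_{\{z+y\ge1\}}$, then gives the a.s.\ limit $\PP_{\widetilde\pi}(Z+Y\ge1)=1-\pi_0p_0$.
\end{itemize}

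Combining these limits yields
\[
\frac{Z_n}{n}\to \mu_y-(1-\pi_0 p_0)\quad\text{a.s.}
\]

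The main obstacle is reconciling this with $m_{c_0}=\lim\PP(Z_n\ge1)=1-\pi_0$ as stated. The two differ by $\pi_0(1-p_0)$, which is the contribution of years that start empty but receive inflow. Two routes are available.

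\begin{enumerate}
\item Adopt the convention already used in the proof of Proposition~\ref{prp:as}, namely that the controlled outflow is counted as ${\tt 1}_{\{Z_k>0\}}$. With this convention the identity is exact, and the theorem reads $Z_n/n\to\mu_y-(1-\pi_0)$.
\item Interpret $m_{c_0}$ as the long-run mean released outflow $c_0\PP_{\widetilde\pi}(Z+Y\ge c_0)$. Then the computation above gives the statement directly.
\end{enumerate}

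I would state clearly which interpretation is used, and I would prefer the second as the correct physical one. Either way, the core of the proof is the single application of the Markov chain ergodic theorem, which requires positive recurrence from \eqref{eq:cond}.

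A final remark: in the positive recurrent regime $Z_n/n\to0$. Hence the limit must in fact be $0$, and the proposition really yields the balance identity $\mu_y=m_{c_0}$. In particular $\mu_y<c_0$ is necessary, and I would note this consistency check, since the semi-infinite dam has no overflow term.
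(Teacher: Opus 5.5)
Your argument is the paper's own. Its proof simply reruns the telescoping argument of Proposition~\ref{prp:as} without the overflow term, and your computation improves on it by using the correct release indicator $\mathbf{1}_{\{Z_k+Y_k\ge 1\}}$. Your limit $\mu_y-(1-\pi_0p_0)$ is right. The gap is route~1, which is not a convention you are free to adopt. Under \eqref{eq:moran_infinite} the identity $Z_{n+1}=Z_0+\sum_{k=0}^{n}Y_k-\sum_{k=0}^{n}\mathbf{1}_{\{Z_k>0\}}$ is simply false: with $Z_0=0$ and $Y_0=2$ one has $Z_1=1$, while the right-hand side equals $2$. So the sentence ``with this convention the identity is exact'' is wrong. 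This is precisely the identity the paper carries over from Proposition~\ref{prp:as}, so the paper's one-line proof rests on this false step, and route~1 would reproduce it.

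Consequently, the proposition as stated cannot be proved, and you should conclude that rather than offer two routes. You already have both ingredients. First, $Z_n/n\to\mu_y-(1-\pi_0p_0)$ a.s. Second, $Z_n/n\to 0$ because $Z_n=0$ infinitely often. Together these give $\mu_y=1-p_0\pi_0$. The same identity follows from the paper's formula \eqref{eq:G} at $s=1$, which yields $\pi_0=1-B(1)=(1-\mu_y)/p_0$.

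The claimed limit therefore equals $\mu_y-m_{c_0}=\mu_y-(1-\pi_0)=\pi_0(1-p_0)$. This is strictly positive whenever $p_0<1$, contradicting $Z_n/n\to 0$. Concretely, take $K=2$ and $(p_0,p_1,p_2)=(0.5,0.3,0.2)$, which satisfies \eqref{eq:cond}. Then $\pi_0=0.6$, $\mu_y=0.7$ and $m_{c_0}=0.4$, so the proposition would assert $Z_n/n\to 0.3$.

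The correct result is your route~2 with $c_0=1$: $Z_n/n\to\mu_y-\PP_{\widetilde\pi}(Z+Y\ge 1)=0$, i.e.\ the balance $\mu_y=1-\pi_0p_0$. Drop route~1 and state explicitly that the version with $m_{c_0}=c_0\lim\PP(Z_n\ge c_0)$ is false. As a side remark, your necessary condition $\mu_y<c_0$ is actually equivalent to \eqref{eq:cond}, since $\sum_{k=1}^{K-1}q_k=\sum_{j\ge 2}(j-1)p_j/p_0$.
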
	
\begin{proof}	
The proof follows by the same arguments as Proposition \ref{prp:as}.	
\end{proof}

\bigskip

\begin{theorem}\label{theo:clt_infinite} {\bf A Central limit theorem for storage volume (semi-infinite case)}\\
	Let $Z_n$ denote the volume of water stored in a reservoir at time $n \ge 0$. Let $Y_n$ denote the total input registered during the period $(n,n+1)$. We assume that $\{Y_n, n\ge 0\}$ form a sequence of i.i.d. random variables satisfying condition \eqref{eq:cond} and with common mean $\mu_y$, and variance $\sigma_y^2 < \infty$. Let also assume the couple $(Y,Z)$ fits the model expressed in \eqref{eq:moran_infinite}. Then 
	\[
	\sqrt{n}\left(\frac{Z_n}{n} - \mu_g\right) \stackrel{\cal D}{\longrightarrow} N(0, \sigma_g^2) \ \ {\rm as \ } n \rightarrow +\infty,
	\]
	with $
	\sigma_g^2=\sum_{y,z} p_y \pi(z) \widetilde{g}(y,z)^2+2 \sum_{y> 0, z> 0}\sum_{y',z'} p_y\pi(z) h_0(y,z,y',z') \widetilde{g}(y,z)\widetilde{g}(y'z')$,
	where $\mu_g=\mu_y -(1-\pi_0)$; $\widetilde{g}(y,z)=y-1_{\{z>0\}}-\mu_g$; and $h_0(y,z,y',z')=\sum_{n=1}^{\infty}\PP(T_{0}\geq n, Y_n=y', Z_n=z'\mid Y_0=y,Z_0=z)$, with $T_{0}=\inf\{n:Y_n=0,Z_n=0\}$.

\end{theorem}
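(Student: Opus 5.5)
We plan to treat $Z_n$ as an additive functional of the bivariate chain $X_k=(Y_k,Z_k)$ and then apply a regenerative (renewal-reward) CLT, with the atom $(0,0)$ serving as the regeneration state. The starting point is the telescoped Moran recursion in the semi-infinite case, as in the proof of Proposition \ref{prp:as} but with no overflow term:
\[
Z_n=Z_0+\sum_{k=0}^{n-1}\bigl(Y_k-{\tt 1}_{\{Z_k>0\}}\bigr)=Z_0+\sum_{k=0}^{n-1}g(X_k),\qquad g(y,z)=y-{\tt 1}_{\{z>0\}},
\]
which uses $c_0=1$. Since $Z_0/\sqrt n\to 0$, this gives
\[
\sqrt n\left(\frac{Z_n}{n}-\mu_g\right)=\frac{1}{\sqrt n}\sum_{k=0}^{n-1}\widetilde g(X_k)+o(1).
\]
The statement therefore reduces to a CLT for the additive functional $\sum_k \widetilde g(X_k)$ of the chain $X$.

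\textbf{Step 1: ergodicity and the centring.} Condition \eqref{eq:cond} and Proposition \ref{prp:pi_infinite} make $Z$ positive recurrent with stationary law $\pi$. The argument of Lemma \ref{lemma:pi} then carries over verbatim: $X$ is an ergodic chain on $E_y\times\N$ with stationary law $\widetilde\pi(y,z)=p_y\pi(z)$. Consequently
\[
\EE_{\widetilde\pi}[g]=\mu_y-(1-\pi_0)=\mu_g,
\]
so $\widetilde g$ is centred. Because $Y_k$ has finite support, $g$ is bounded, so $\widetilde g\in L^2(\widetilde\pi)$. We would also flag that positive recurrence, combined with the identity $Z_n/n\to\mu_y-m_{c_0}$, forces $\mu_g=0$; the CLT is still meaningful as the fluctuation result $Z_n/\sqrt n\Rightarrow N(0,\sigma_g^2)$.

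\textbf{Step 2: regeneration.} Let $T_0$ be the successive hitting times of the state $(0,0)$, which is reachable because $p_0>0$. These times split the trajectory into i.i.d. cycles. By the renewal-reward CLT (equivalently, the Markov chain CLT of \cite{Port1994} / \cite{TrevezasLimnios09} in its countable-state regenerative form), the sum $\frac{1}{\sqrt n}\sum\widetilde g(X_k)$ is asymptotically normal with variance
\[
\sigma_g^2=\frac{\EE_{(0,0)}\bigl[(\sum_{k<T_0}\widetilde g(X_k))^2\bigr]}{\EE_{(0,0)}[T_0]},
\]
provided this second moment is finite. Two things need checking:
\begin{itemize}
\item[(i)] $\EE_{(0,0)}[T_0^2]<\infty$, which suffices because $\widetilde g$ is bounded;
\item[(ii)] the variance can be rewritten in the stated form.
\end{itemize}

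\textbf{Step 3: the variance formula.} Expand the square of the cycle sum into diagonal terms and off-diagonal terms $2\sum_{j<k}$. Then use the cycle formula $\widetilde\pi(x)=\EE_{(0,0)}[\#\{k<T_0:X_k=x\}]/\EE_{(0,0)}[T_0]$ together with the Markov property at time $j$. The diagonal terms produce $\sum p_y\pi(z)\widetilde g(y,z)^2$. The cross terms produce $2\sum_{x}\widetilde\pi(x)\widetilde g(x)\sum_{x'}h_0(x,x')\widetilde g(x')$, where $h_0$ is the expected number of visits to $x'$ before returning to $(0,0)$ starting from $x$, which is exactly the taboo Green function in the statement. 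Starting from $(0,0)$ itself the cross term is zero, since the excursion terminates immediately. This justifies restricting the outer sum to states away from the atom, matching the indexing $y>0$, $z>0$ up to this bookkeeping.

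\textbf{Main obstacle: finiteness of $\EE[T_0^2]$.} We expect step (i) to be the hardest part. Condition \eqref{eq:cond} gives a negative drift for $Z$, since it implies $\mu_y<1$. Because increments are bounded ($Y\le K$), we would try a Foster–Lyapunov argument with $V(z)=z^2$ (or an exponential $V(z)=e^{\theta z}$). Bounded increments and negative drift $\mu_y-1<0$ yield geometric ergodicity of $Z$, hence exponential tails for return times to $0$ and hence to $(0,0)$, since from $Z=0$ the atom is hit with probability at least $p_0$ per step. This gives all moments of $T_0$ and completes the argument.
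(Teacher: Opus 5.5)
\textbf{The gap is in your first line.} The telescoped identity $Z_n=Z_0+\sum_{k<n}(Y_k-\mathbf{1}_{\{Z_k>0\}})$ is false under \eqref{eq:moran_infinite}. With $c_0=1$, a unit is released whenever $Z_k+Y_k\ge 1$. So if $Z_k=0$ and $Y_k\ge 1$, then $Z_{k+1}=Y_k-1$, not $Y_k$. For example, $Z_0=0$, $Y_0=2$ gives $Z_1=1$, whereas your formula gives $2$. The correct increment is $Y_k-\mathbf{1}_{\{(Y_k,Z_k)\neq(0,0)\}}$, and therefore
\[
\sum_{k=0}^{n-1}\bigl(Y_k-\mathbf{1}_{\{Z_k>0\}}\bigr)=Z_n-Z_0+\#\{k<n:\ Z_k=0,\ Y_k\ge 1\}.
\]
The paper's proof rests on the same identity when it asserts $S_n=Z_{n+1}$, so you have inherited this gap rather than introduced it.

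Your Step 1 flag is precisely the symptom. Evaluating \eqref{eq:G} at $s=1$ gives $\pi_0=1-\sum_k q_k=(1-\mu_y)/p_0$, i.e.\ $p_0\pi_0=1-\mu_y$. Hence $\mu_g=\mu_y-(1-\pi_0)=\pi_0(1-p_0)>0$ whenever $0<p_0<1$. The clash with ``positive recurrence forces $\mu_g=0$'' refutes the identity; it does not pin down $\mu_g$. For $K=2$ you can check it directly: $\pi_0=1-p_2/p_0$ and $\mu_y=1-p_0+p_2\neq p_2/p_0=1-\pi_0$.

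\textbf{Consequently the displayed conclusion cannot be proved by any route.} Since $Z$ is positive recurrent and aperiodic ($P_{00}=p_0+p_1>0$), $Z_n$ converges in law to $\pi$. It is therefore tight, and $Z_n/\sqrt n\to 0$ in probability. Two cases follow.
\begin{itemize}
\item With the stated centring, $\sqrt n(Z_n/n-\mu_g)=Z_n/\sqrt n-\sqrt n\,\mu_g$ diverges to $-\infty$ in probability.
\item With the corrected increment $g^*(y,z)=y-\mathbf{1}_{\{(y,z)\neq(0,0)\}}$, which does have stationary mean $0$, every regeneration cycle from $(0,0)$ back to $(0,0)$ has total reward $Z_{T_0}-Z_0=0$ identically. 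Your cycle formula then gives variance $0$, so the limit is degenerate.
\end{itemize}
Your fallback ``$Z_n/\sqrt n\Rightarrow N(0,\sigma_g^2)$'' therefore fails unless $\sigma_g^2=0$.

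What your Steps 2--3 (and the paper's appeal to \cite{Port1994}, \cite{TrevezasLimnios09}) genuinely prove is a CLT for the additive functional $S_n=\sum_{k<n}g(X_k)$. Because $Z_n$ is tight, this is equivalent to a CLT for the count $\#\{k<n: Z_k=0,\ Y_k\ge 1\}$ centred at $n\pi_0(1-p_0)$. It is not a CLT for the storage $Z_n$. A meaningful storage CLT would instead concern the cumulative storage $\sum_{k<n}Z_k$, as in the finite case.

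\textbf{Two smaller remarks.}
\begin{itemize}
\item You correctly identify the real integrability requirement as $\mathbb{E}_{(0,0)}[T_0^2]<\infty$ (using boundedness of $\widetilde g$), and your geometric-drift argument for it is correct. This is sounder than the paper's verification of \eqref{eq:port1}: pointwise finiteness of $h_0$ together with $\mathbb{E}_{\widetilde\pi}[\widetilde g^2]<\infty$ does not imply \eqref{eq:port1}.
\item Port's outer sum excludes only the atom $(0,0)$. The statement's restriction to $y>0$, $z>0$ also drops the non-atom states $(0,z)$ and $(y,0)$, so it is not a matter of bookkeeping.
\end{itemize}
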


\begin{proof}
	To prove this result we consider the work on the central limit theorem (CLT) for ergodic Markov chains presented in \cite{Port1994} which is valid on countable state space. 
	
	Let ${\bf X}_0=(Z_0,0)$ and for $n>0$ let ${\bf X}_n=(Y_{n-1},Z_{n-1})$. Then, ${\bf X}=\{{\bf X}_n; n\ge 0\}$ is a Markov chain on a countable state space $E=E_y \times E_z$, with transition matrix ${\widetilde{\bf P}}$ and stationary distribution $\widetilde{\pi}$ defined in \eqref{eq:pi2}.

	We define a reward function $g(i,j):=i-1_{\{j>0\}}$, for all $(i,j)\in E$. Then we have that $\EE_{\widetilde{\pi}}[g(X_1,X_2)]=\mu_g$. Let us also define $\widetilde{g}(i,j):=g(i,j)-\mu_g$. Finally, we consider the cumulative sum $S_n:=\sum_{k=0}^n g(X_{1,k},X_{2,k})$.
	
	In our model, this cumulative reward at $n$ coincides with the second coordinate of the chain ${\bf X}$ but one step ahead, that is, we get that $S_n=Z_{n+1}$.
	This identity will allow us to analyse the behaviour of the volume stored in the dam, $\{Z_n, n\geq 0\}$ through the well-developed theory of additive functionals of Markov chains (\cite{Port1994},\cite{MeynTweedie09}).
	In particular, we use {\bf Theorem 1} in \cite{TrevezasLimnios09}, which is itself a restatement of the original result in \cite{Port1994}. We have adapted the statement in terms of our notation as follows.

	\noindent Assume that for some state $a=(a_1,a_1)\in E$, 
	\begin{equation}\label{eq:port1}
		{\sum_{y,z,y',z'}} \widetilde{\pi}(y,z)h_a(y,z,y',z') |\widetilde{g}(y,z)| |\widetilde{g}(y'.z')|< \infty
	\end{equation}
	and set  
	$\sigma_g^2=\sum_{y,z} \widetilde{\pi}(y,z) \widetilde{g}(y,z)^2+2\sum_{(y,z)\neq a}\sum_{y',z'} \widetilde{\pi}(y,z)h_a(y,z,y',z') \widetilde{g}(y,z)\widetilde{g}(y',z'),$
	where $\widetilde{g}$ and $h_a$ are defined above.  Then, for any initial distribution,
	\begin{equation}\label{eq:port2}
		\sqrt{n}(\frac{S_n}{n}-\mu_g)\stackrel{D}{\rightarrow}N(0,\sigma_g^2).
	\end{equation}
	
	To get \eqref{eq:port2}, we need to check that  \eqref{eq:port1}. We take for instance $a=(0,0)$. From Proposition \ref{prp:pi_infinite}, we know that the two-dimensional chain ${\bf X}$ is ergodic. Consequently, the expected time of first entrance into state  $(0,0)$, denoted by $\EE[T_0]<\infty$, is finite. Let   $N_{(y',z')}^{0}$ denote  the number of visits to state $(y',z')$ before yielding state $(0,0)$ for the first time, then we have that $h_0(y,z,y',z')=\EE_{(y,z)}[N_{(y',z')}^{0}]<\infty$ for any $y,z,y',z'$. So, to prove \eqref{eq:port1}, it is enough that 
	\begin{equation}\label{eq:port3}
		\EE_{\widetilde{\pi}}[(\widetilde{g}(X_1,X_2))^2]<\infty
	\end{equation}
	with $\EE_{\widetilde{\pi}}[\cdot]$ denoting expectation under the stationary distribution. The second moment in \eqref{eq:port3} is guaranteed since $Var(Y)=\sigma_y^2< \infty$. 
	
	In our particular case we have that $S_n=Z_{n}$ and $\mu_g=\mu_y-(1-\pi_0)$, which can be deduced directly from Proposition \ref{prp:as}  with $M=0$, as in the case of infinite capacity there is no overflow.
\end{proof}
\section{Stochastic properties of a reservoir system based on the Moran model with continuous input}\label{sec:continuous}

Let $G(y) = \mathbb{P}(Y_n \leq y)$ be the cumulative distribution function of $Y_n$ for $y \in E_y \subset \R^+$, with $\PP(Y_n=0)\! > \!0$, implying that  $G$ has an atom at zero. Suppose the reservoir releases a fixed amount, $c_0 > 0$, whenever available, and has a maximum capacity of $C_1$. Under these conditions, the transition function is given by:

\begin{equation}\label{eq:moran_cont}
P(z_1,[0,z_2])=\left\{\begin{array}{ll}
	G(c_0-z_1), & 0 \leq z_1 \leq c_0, \ z_2=0 \\
	G(c_0-z_1+z_2), & \max\{0, c_0-z_1\} < z_2 < C_1-c_0 \\
	1-G(C_1-z_1), & 0 \leq z_1 \leq C_1 -c_0, \ z_2 =C_1-c_0 \\
	0, & {\rm elsewhere}.
\end{array}\right.
\end{equation}

\begin{remark}
\medspace
\begin{enumerate}
	\item In our context $P(Y_n=0)>0$ is a natural physical hypothesis. It follows that $G(0)\!>\!0$; hence $G$ has an atom at zero and is therefore not absolutely continuous with respect to the Lebesgue measure.
	
\item  With the definitions above, we have that
\begin{eqnarray*}
	&&\hspace{-3cm} \PP((Y_{n+1},Z_{n+1})\in B_y \times B_z\mid (Y_{n},Z_n)=(y,z))=\\
	&&\qquad =\PP(Y_n \in B_y)\PP(Z_{n+1}\in  B_z\mid Z_{n}=z,Y_n=y)\\
	&&\qquad={\tt 1}_{\{((z+y)\wedge (C_1-c_0))^+\in B_z\}} G(B_y)
\end{eqnarray*}
for any Borel set $B_y\times B_z \subseteq E_y \times E_z$.

\item If $\{Y_n\}$ are i.i.d., then $\{Z_n, n\ge 0\}$ is a MC with a general state space and transition function given in \eqref{eq:moran_cont}, otherwise, we have that $(Y,Z)$ is a two-dimensional MC.
\item If we denote $U_n$ the actual input, then, at any time $n$, we have that $U_n=\min\{Y_n, C_1-Z_n\}$.
\end{enumerate}
\end{remark}
\noindent For the next Proposition, we would need some regularity conditions on the Markov chain $Z$.  In particular we are interested in ergodicity properties, and the existence of an invariant measure $\pi$, which satisfies that $\int_{E_z} P(z, A)\pi({\rm d} z) =\pi(A)$ for any measurable set $A \subseteq E_z$, where $P(z,A)=\PP(Z_1 \in A \mid Z_0=z)$ is the transition kernel. We can use Theorem 13.0.1 on page 313 of \cite{MeynTweedie09} ({\it Ergodic Theorem}),  to establish that if $\{Z_n; n \geq 0\}$ is a Markov chain on a state space $E_z$ with transition kernel $P$, which is irreducible, aperiodic, and positive recurrent and with a unique stationary distribution $\pi$, then, for any initial state $z \in E_z$,
\begin{equation}\label{eq:doeblin1}
\lim_{n \to \infty} P_n(z, A) = \pi(A) \quad \text{for any measurable set } A \subseteq E_z.
\end{equation}

A sufficient condition for the above result to hold is the Doeblin condition, i.e. there exists $ \delta > 0 $ constant, $ n_0 \ge 1$, and a probability measure $\nu$ on $E_z$ such that for all $z\in E_z$ and all measurable set $ A \subseteq E_z$,
\begin{equation}\label{eq:doeblin2}
P_{n_0}(z, A) \geq \delta \nu(A).
\end{equation}
Under the Doeblin condition, there exists a unique stationary distribution $ \pi$ for the Markov chain, the chain is positive recurrent, and for any initial state $z \in E_z$, \eqref{eq:doeblin1} holds.

\begin{lemma}
Let $G(y) = \mathbb{P}(Y_n \leq y)$ be the cumulative distribution function of $Y_n$ for $y \in E_y \subset \R^+$, with $\PP(Y_n=0)\! > \!0$. The Markov chain $Z=\{Z_n; n\ge 0\}$, with state space $E_z=[0,C_1-c_0]$ and  with the kernel given in \eqref{eq:moran_cont} satisfies the Doeblin condition.
\end{lemma}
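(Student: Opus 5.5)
The idea is to exploit the atom of $G$ at zero. With $p_0:=\PP(Y_n=0)=G(0)>0$, a year with no inflow lowers the stored volume by at least $c_0$, or empties the reservoir if less than $c_0$ is stored. The recursion is $Z_{n+1}=\min\{C_1-c_0,\max\{Z_n+Y_n-c_0,0\}\}$. Hence on the event $\{Y_n=0\}$ we have $Z_{n+1}=\max\{Z_n-c_0,0\}$, whatever the value of $Z_n\in E_z=[0,C_1-c_0]$. I will take the minorizing measure $\nu=\delta_0$, the Dirac mass at the empty state, and look for an $n_0$ such that every starting point reaches $0$ with probability at least $p_0^{n_0}$.

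The steps are as follows.

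First, set $n_0:=\lceil (C_1-c_0)/c_0\rceil$, and set $n_0=1$ if this quantity is zero or less. This is finite because $0<c_0<C_1$.

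Second, fix any $z\in E_z$ and consider the event $A_{n_0}=\{Y_0=Y_1=\cdots=Y_{n_0-1}=0\}$. On this event the volume is $\max\{z-kc_0,0\}$ after $k$ steps. Since $z\le C_1-c_0\le n_0c_0$, this gives $Z_{n_0}=0$.

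Third, the $Y_k$ are i.i.d. and independent of $Z_0$, so $\PP(A_{n_0}\mid Z_0=z)=p_0^{n_0}$. Consequently, for every measurable $A\subseteq E_z$,
\[
P_{n_0}(z,A)\ \ge\ \PP\big(Z_{n_0}\in A,\ A_{n_0}\mid Z_0=z\big)\ =\ p_0^{n_0}\,{\tt 1}_{\{0\in A\}}\ =\ p_0^{n_0}\,\delta_0(A).
\]
This is exactly \eqref{eq:doeblin2} with $\delta=p_0^{n_0}$ and $\nu=\delta_0$, and it holds uniformly in $z$.

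The main point needing care is to connect this with the kernel as written in \eqref{eq:moran_cont}. In particular, the first line gives $P(z_1,\{0\})=G(c_0-z_1)\ge G(0)=p_0$ when $z_1\le c_0$. When $z_1>c_0$, the no-inflow transition is $z_1\mapsto z_1-c_0$, which lies in the middle branch of the kernel. There the mass at the single point $z_2=z_1-c_0$ equals the jump of $G$ at $0$, namely $G(0)-G(0^-)=p_0$. So the one-step bound $P(z,\{\max\{z-c_0,0\}\})\ge p_0$ holds for all $z$, and the Chapman–Kolmogorov equations iterate it to the $n_0$-step bound above.

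If one prefers to avoid the kernel bookkeeping, the pathwise argument in the second and third steps, based directly on the recursion and independence, already suffices. It should be the version presented.
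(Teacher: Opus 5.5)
Your argument is correct, but it uses a different minorizing measure from the paper's. You take $\nu=\delta_0$ and $n_0=\lceil (C_1-c_0)/c_0\rceil$, so that $n_0$ null years empty the reservoir from \emph{every} level. This gives $P_{n_0}(z,A)\ge p_0^{n_0}\delta_0(A)$ for all Borel $A$ at once, uniformly in $z$.

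The paper instead uses $k=\lfloor (C_1-c_0)/c_0\rfloor$, $n_0=k+1$, $\delta=\PP(Y_n=0)^k$, and a measure spread over $E_z$, $\nu([z_0,z_1])=G(z_1+c_0)-G(z_0+c_0)$. This is essentially the law of the level one year after the dam is empty, without its atom at $0$. The paper then asserts $P_{k+1}(z,[0,z_1])\ge \PP(Y_n=0)^k\,(G(z_1+c_0)-G(c_0))$, but only on intervals $[0,z_1]$.

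Your version is better on every count:
\begin{itemize}
\item The bound holds on all measurable sets, not just on a family of intervals.
\item Your $\nu$ is a genuine probability measure. The paper's has total mass $\PP(c_0<Y_n\le C_1)$, which may be $0$.
\item You never need $k$ null years to empty the dam. From the top level they do not, whenever $(C_1-c_0)/c_0\notin\N$.
\end{itemize}

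In fact the paper's displayed bound can fail. Take $c_0=10$, $C_1=32$ (so $k=2$), $Y_n\in\{0,10.5\}$ and $\PP(Y_n=0)=\epsilon<1/2$. Enumerating the eight paths from $z=22$ gives $P_3(22,[0,1])=\epsilon^3<\epsilon^2(1-\epsilon)$.

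The paper's idea can be repaired by composing your bound with one more step, $P_{n_0+1}(z,\cdot)\ge p_0^{n_0}P(0,\cdot)$, i.e.\ minorizing by the law of $\min\{C_1-c_0,(Y_n-c_0)^+\}$. A spread-out $\nu$ of this kind is not needed for the Doeblin condition, though. Your atom-based version is the more elementary and more transparent one.
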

\begin{proof}
Let $k=\left[\frac{C_1-c_0}{c_0}\right]$ be the integer part of the division $(C_1-c_0)/c_0$, and take any interval $A=[0,z_1] \subseteq E_z$. Then, given that $\PP(Y_n=0)>0$, we can check that 
\[
P_{k+1}(z,[0,z_1]) \ge (\PP(Y_n=0))^{k}({G}(z_1+c_0)-G(c_0)).
\]
 Let us take $n_0 = k+1$ and $\delta = \mathbb{P}(Y_n = 0)^k$, and, for any $0 \le z_0 \le z_1 \le C_1 - c_0$, define $\nu([z_0, z_1]) := G(z_1 + c_0) - G(z_0 + c_0)$, 
 which is the measure induced by $G$ on $E_z = [0, C_1 - c_0]$. With this definition, we conclude that the Doeblin condition is satisfied for $Z$.

\end{proof}

Finally, as known, for any measure $\rho$ it can be defined the following operator
\[
\rho P_n(A)= \int_{E_z} P_n(x,A)\rho({\rm d} x)
\]
for any $A \subset E_z$.

If $\rho_0$ denotes the initial law of the Markov chain $Z$, then the occupation probability distribution at time $n$ is $p_n(\cdot)= \rho_0P_n(\cdot)$. In particular for intervals of type $[0, z]$, we are interested in $F_n(z):=\PP(Z_n \leq z)=p_n([0,z])$, for any $z \in R_+$, and its limit for $n \rightarrow +\infty$, i.e. $F(z)$.

\bigskip

\begin{proposition}\label{pr:2}
Let $Z=\{Z_n, n \geq 0\}$, $Y=\{Y_n, n\geq 0\}$ as defined above. For all $Y_n$ we denote $\bar{G}(y)=\PP(Y_n > y)$, for $y \in E_y \subseteq [0,+\infty)$. For $n \geq 0$, let us denote $F_n(z)=\PP(Z_n \leq z)$, for $z \in E_z$, and  $F(z)= \underset{n\rightarrow +\infty}{\lim}F_n(z)$, for $z\in E_z$. Then
$$F(z)= \displaystyle{\int_0^{C_1-c_0}\! \! G(c_0+z-x) \  F ({\rm d}x)}.$$

\end{proposition}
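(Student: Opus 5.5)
The plan is to derive a one-step recursion for $F_n$ from the Moran dynamics and then pass to the limit using the Doeblin lemma established just above.

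\emph{One-step recursion.} Write $Z_{n+1}=\min\{C_1-c_0,\max\{Z_n+Y_n-c_0,0\}\}$. Since $Y_n$ is independent of $Z_n$ (as $Z_n$ is a function of $Z_0,Y_0,\ldots,Y_{n-1}$), condition on $Z_n=x$.

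For $0\le z<C_1-c_0$, the event $\{Z_{n+1}\le z\}$ coincides with $\{x+Y_n-c_0\le z\}$. This holds both when the positive part is active, since then $Z_{n+1}=0\le z$, and in the interior regime. Hence
\[
\PP(Z_{n+1}\le z\mid Z_n=x)=G(c_0+z-x),
\]
which agrees with the first two lines of the kernel \eqref{eq:moran_cont}. Integrating against the law of $Z_n$ gives
\[
F_{n+1}(z)=\int_0^{C_1-c_0} G(c_0+z-x)\,F_n({\rm d}x).
\]

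At $z=C_1-c_0$ both sides of the target identity should equal $1$. The left side is $F(C_1-c_0)=1$. On the right, $c_0+z-x=C_1-x\ge c_0>0$, so the integrand should be read as $\PP(Z_{n+1}\le C_1-c_0\mid Z_n=x)=1$. This is the truncated version of $G$, i.e.\ the third line of \eqref{eq:moran_cont}. I would state this convention explicitly, or restrict the formula to $z<C_1-c_0$ and note that the boundary case is trivial.

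\emph{Passing to the limit.} By the Doeblin lemma and \eqref{eq:doeblin1}, for any initial law $\rho_0$ we have $p_n(A)\to\pi(A)$ for every measurable $A$. Doeblin in fact gives uniform geometric ergodicity, so $\|p_n-\pi\|_{TV}\to0$. In particular $F_n(z)\to F(z)=\pi([0,z])$ on the left-hand side.

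For the right-hand side, $x\mapsto G(c_0+z-x)$ is a bounded measurable function with values in $[0,1]$. Total variation convergence therefore gives
\[
\int G(c_0+z-x)\,F_n({\rm d}x)\to\int G(c_0+z-x)\,F({\rm d}x).
\]
Equating the two limits yields the claimed identity.

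An alternative route avoids the limit argument: write the stationarity equation $\pi P=\pi$ directly for $A=[0,z]$ and insert the kernel $P(x,[0,z])=G(c_0+z-x)$.

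\emph{Main obstacle.} The delicate step is justifying convergence of the integrals. $G$ has an atom at $0$ and possibly other discontinuities, so weak convergence of $F_n$ alone would not suffice. This is why I would rely on the total-variation (setwise) convergence supplied by the Doeblin condition rather than on pointwise convergence of $F_n$. If only setwise convergence \eqref{eq:doeblin1} is available, I would approximate the bounded measurable integrand uniformly by simple functions, for which setwise convergence applies directly. A secondary point is the careful treatment of the boundary cases $z=0$ and $z=C_1-c_0$. These correspond to the atoms of the kernel, and there the event identity must be checked separately.
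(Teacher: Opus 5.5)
Your proof is correct and has the same skeleton as the paper's. You condition on $Z_n$ (independent of $Y_n$) to get $F_{n+1}(z)=\int_0^{C_1-c_0}G(c_0+z-x)\,F_n({\rm d}x)$ for $z<C_1-c_0$; the paper writes this as $1-\int_0^{C_1-c_0}\bar{G}(c_0+z-x)\,F_n({\rm d}x)$. Then you let $n\to\infty$.

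The difference is in how the limit is justified. The paper only says that ``$G$ is a continuous and bounded function'', and that is not available here. Since $\PP(Y_n=0)>0$, $G$ jumps at $0$, so $x\mapsto G(c_0+z-x)$ jumps at $x=c_0+z$. That point can be an atom of $F$: if the reservoir can fill, $F$ charges $C_1-c_0$ and the levels $C_1-jc_0$ reached from it after dry years. So convergence in distribution together with ``continuity'' does not close the step. Your use of total-variation (or setwise) convergence covers any bounded measurable integrand, and it is the rigorous version of that step. Your alternative via $\pi P=\pi$ on $[0,z]$ is cleaner still, because it only needs $F_n(z)\to\pi([0,z])$.

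One caution about the lemma you cite. The specific minorization written in the paper's Doeblin lemma is false as stated. Take $c_0=1$, $C_1=2.9$, $Y_n\in\{0,1.3\}$ and $\PP(Y_n=0)=\epsilon<1/2$; then $P_2(1.9,[0,0.5])=\epsilon^2<\epsilon(1-\epsilon)$. The Doeblin property itself still holds. Here $k+1$ dry years empty the reservoir from any level, so $P_{k+1}(z,\cdot)\ge \PP(Y_n=0)^{k+1}$ times the point mass at $0$. Your argument survives with that substitute.

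Your boundary remark is also right, and it catches a real slip in the paper. The paper's computation produces the factor ${\tt 1}_{\{C_1>z+c_0\}}$ and then drops it. So the displayed identity with untruncated $G$ holds only for $z<C_1-c_0$. At $z=C_1-c_0$ its right-hand side is $\PP(Z+Y\le C_1)$ under stationarity, which is strictly less than $1$ whenever overflow has positive probability.
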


\begin{proof}
Let $z \geq 0$, then
\begin{eqnarray}\label{eq:F1}
	\nonumber&&\hspace{-1.5cm} F_{n+1}(z)=\PP(\max\{0,Z_n+Y_n-c_0\}\leq z)\\
	\nonumber &&=\PP(Z_n+Y_n-c_0\leq z)=\PP(Z_n+\min\{Y_n,C_1-Z_n\}\leq z+c_0)\\
	\nonumber &&=\PP(\min\{Z_n+Y_n,C_1\}\leq z+c_0)\\
	\nonumber &&=1-\PP(Z_n+Y_n > z+c_0,\ C_1>z+c_0)\\
	&&= 1-{\tt 1}_{\{C_1>z+c_0\}}\EE \left[\PP(Z_n+Y_n > z+c_0 \mid Z_n)\right]
\end{eqnarray}
We have that $\PP(Z_n+Y_n>c_0+x\mid Z_n=z)=1-G(c_0+x-z)$, 
then, we can substitute in \eqref{eq:F1} and obtain
\begin{eqnarray}\label{eq:F2}
	\nonumber&&\hspace{-1.5cm} F_{n+1}(z)=1-{\tt 1}_{\{C_1>c_0+z\}}\EE\left[\bar{G}(c_0+z-Z_n)\right]
=1- \int_0^{C_1-c_0}\bar{G}(c_0+z-x) F_n({\rm d}x)\\
	&&=\int_0^{C_1-c_0} F_n({\rm d}x) -\int_0^{C_1-c_0} \bar{G}(c_0+z-x)  F_n({\rm d}x).
\end{eqnarray}
Taking limits and using that $G$ is a continuous and bounded function, we get that
\[
\underset{n\rightarrow +\infty}{\lim} F_{n+1}(z)=F(z)=\int_{0}^{C_1-c_0}\! G(c_0+z-x) F({\rm d}x).
\]
\end{proof}

\section{Usual performance measures for reservoir management}\label{sec:measures}
Let us consider the more general case where $\{Y_n, n \ge 0\}$ is a Markov chain, and let denote $\widetilde{\bf P}$ the transition  probability matrix of the 2-dimensional MC $(Y,Z)$ whose state space $E=E_y \times E_z$ is considered as noted above considering the lexicographic order prioritizing $Z$. \\
\noindent In particular we are interested in the following dependability measures.
\subsection{Reliability measures}
We call ``empty'' the state in which the volume of water in the reservoir is not enough to provide one single year of water demand. 
Let $Z_0=z_0$ denote the initial volume of water in the reservoir, with $z_0 \in E_z$.
\begin{itemize}
	\item {\bf Probability of first emptiness}\\
	Let $R_{z_0}(n)$ denote the probability that the reservoir is not empty at any time before $n$, given the initial volume of $z_0$. For $n > 0$, 
	\begin{equation}\label{eq:R}
		R_{z_0}(n)= \sum_{y \in E_Y} \PP(Z_k\neq 0, k=1,\ldots, n \mid Z_0=z_0,Y_0=y) \PP(Y_0=y) 
	\end{equation}
	Let  $\widetilde{P}_0$ denote the submatrix of $\widetilde{P}$ restricted to transitions between states in $E_0=E_y \times E_{z,0}$, with $E_{z,0}=E_z \setminus \{\bf 0\}$, with $\{\bf 0\}$ representing the  empty-state class. Then
	$R_{z_0}(n)= ({\bf e}_{z_0}\otimes{\bf \pi}_y )\widetilde{{\bf P}}_0^n \ {\bf 1}, $
	where ${\bf \pi}_y$ is the stationary distribution of $Y$. We denote ${\bf e}_{z_0}$ a vector of all components equal to 0 except for a 1 at the index of $z_0 \in E_{z,0}$, and, ${\bf 1}$ is an all-ones vector of the adequate dimension to produce the sum.
	\bigskip
	
	\item {\bf Probability of emptiness (not necessarily the first)}\\
	Let $A_{z_0}(n)$ denote the probability that the reservoir is not empty at time  $n$, given the initial volume of $z_0$. For $n > 0$
	\begin{equation}\label{eq:A}
	A_{z_0}(n)=({\bf e}_{z_0}\otimes{\bf \pi}_y ) \widetilde{{\bf P}}^n\ {\bf 1}, 
	\end{equation}
	for $n >1$.
	For this case, we denote ${\bf e}_{z_0}$  as a vector of zeros with a 1 in the position corresponding to the index of $z_0$ in $E_z$.
	
	\item {\bf Mean time to emptiness}\\
	Starting at conditions $Z_0=z_0$, the mean time to first reach the empty state  can be  approximated by 
	\begin{equation}\label{eq:mttf}
	MTTE_{z_0}= \sum_{n >0} n (R_{z_0}(n\!-\!1)-R_{z_0}(n))=({\bf e}_{z_0}\otimes{\bf \pi}_y )({\bf I}- \widetilde{{\bf P}}_0)^{-1}\ {\bf 1}.
	\end{equation}

	\item {\bf Long run expected value of water loss per unit time }\\
	Let us denote $C_O(n)=\sum_{k=0}^n \max \{ (Z_k+Y_k-C)>0\}$, then the expected value of water loss per unit time can be determined 
	\begin{equation}\label{eq:loss}
	\underset{n \rightarrow \infty}{\lim}\frac{ C_O(n)}{n}=\sum_{y,z}\widetilde{\pi}(y,z)(z+y-C)^+
	\end{equation}
	where $\widetilde{\pi}(y,z)$ is the stationary distribution of the 2-dimensional process (see Lemma \ref{lemma:pi}).

	\item {\bf Safety level measure}\\
	When the reservoir level drops below a critical threshold, cavitation may occur, causing damage to hydraulic structures and operational failure. To prevent this, a minimum storage level $z_{\text{safe}}$ is defined. The safety level measure $S_{\text{safe}}$ quantifies the probability that the reservoir volume remains above this limit, and is defined at time $n$ by
	\begin{equation}\label{eq:S}
	S_{z_0}(n) = \mathbb{P}(Z_n \ge z_{\text{safe}} \mid Z_0 = z_0).
	\end{equation}
	The long term measure can be defined as $S = \sum_{z \ge z_{\text{safe}}} \widetilde{\pi}(z)$.
	
\end{itemize}
\subsection{Resilience measures}\label{sec:resilience}
In this section we introduce two different resilience concepts useful for  dam management. They describe how long the reservoir can operate safely and how quickly it recovers from extreme conditions. 
First, we let us recall the concept of resilience in a general context.
\begin{definition}{Resilient state}(\cite{Tan2023})\\
	Consider a system with $K$ performance levels evolving as an irreducible Markov chain $\{Z_n, n \ge 0\}$, where $Z_n$ denotes the system state at time $t_n$. The state space $E = \{1, \ldots, K\}$ is ordered by decreasing performance and partitioned as follows:
	\begin{itemize}
		\item Perfect operating states: $E_1 = \{1, \ldots, k_1\}$;
		\item Imperfect operating states: $E_2 = \{k_1+1, \ldots, k_2\}$;
		\item Interrupted (non-resilient) states: $E_3 = \{k_2+1, \ldots, K\}$.
	\end{itemize}
\end{definition}

States in $E_1 \cup E_2$ are considered \emph{resilient}, while those in $E_3$ represent failure or interruption.  
The transition matrix ${\bf P}$ can thus be partitioned into submatrices ${\bf P}_{E_iE_j}$ describing transitions from $E_i$ to $E_j$.  
We define the following first-passage times: $
T_1 = \inf\{n: Z_n \in E_1\}$, and, $T_3 = \inf\{n: Z_n \in E_3\}$.

Following \cite{Tan2023}, resilience encompasses two complementary capabilities:

\begin{itemize}
	\item \textit{Resistant resilience}: the ability to maintain performance under disruption;
	\item \textit{Recovery resilience}: the ability to restore performance after disruption.
\end{itemize}

According to these measures, we consider in this paper the following resilience metrics:
	\begin{itemize}
		\item \textbf{Resistant Resilience:}  
		The expected number of times the system remains in perfect operating conditions before first entering a non-resilient state:
		\begin{equation}\label{eq:res1}
			\mathcal{R}_{\text{res}}
			= \mathbb{E}_i\!\left[\sum_{n=1}^{\infty} \mathbf{1}_{\{T_3 \ge n,\, Z_n \in E_1\}}\right]
			= \mathbb{E}_i[N_3^1],
		\end{equation}
		where $N_3^1$ counts visits to $E_1$ prior to failure.  
		This measure represents the \emph{endurance} or robustness of the system against disturbances.
		
		\item \textbf{Recovery Resilience:}  
		The expected number of times the system remains in a non-resilient state before returning to perfect operation:
		\begin{equation}\label{eq:res3}
			\mathcal{R}_{\text{rec}}
			= \mathbb{E}_i\!\left[\sum_{n=1}^{\infty} \mathbf{1}_{\{T_1 \ge n,\, Z_n \in E_3\}}\right]
			= \mathbb{E}_i[N_1^3].
		\end{equation}
		This quantity reflects the \emph{recovery capacity} of the system and is inversely related to expected restoration time.
	\end{itemize}

While reliability indicators describe the (long-term) dependability of reservoir operation, resilience metrics capture its short-term adaptive behaviour under disruptions and recovery phases.

\subsection*{Resilience metric calculation}
\noindent Let $E$ denote the finite state space of the Markov chain $\{Z_n, n\ge0\}$, with $|E|=k$ and let $A\subset E$, with $|A|=k'$, $k'<k$, be a target  class of states.\\ 
The first entrance time into $A$ is defined as $T_A := \inf\{n\ge1:\; Z_n \in A\}$, 
that is, the first visit to $A$ after time 0.

For any state $j\in E$, we define the random variable $N_j^A := \sum_{n=1}^{T_A} \mathbf{1}_{\{Z_n = j\}}$, representing the number of visits to state $j$ up to (and including) the hitting time of $A$.  
Its expectation, $g_A(i,j) := \mathbb{E}_i[N_j^A]$, can be computed using the fundamental matrix of the sub-chain restricted to $E\setminus A$ \cite{TrevezasLimnios09}:
\begin{equation}\label{eq:NjC_correct}
	g_A(i,j)=
	\begin{cases}
		{\bf e}_{k,i}^{\top}\,{\bf P}_{Ac}^0\,({\bf I} - {\bf P}_{ A})^{-1}\,{\bf e}_{k',j}, & j\notin A,\\[4pt]
		\mathbb{P}_i(Z_{T_A}=j), & j\in A,
	\end{cases}
\end{equation}
where: ${\bf P}_{A}^0$ is the transition submatrix from any state in $E$ to any state in $E\setminus A$; ${\bf P}_{A}$ is the transition matrix between states in ${E\setminus A}$; 
${\bf I}$ is the identity matrix of appropriate dimension; and, for a finite set $A$ with $|A|=m$ and any $i \in A$, ${\bf e}_{m,i}$ denotes the $i$th standard basis in $\R^m$. 

For any disjoint subsets $A, B \subset E$, the expected number of visits to all states in $B$ before first entering $A$ is $g_B(i, A) := \mathbb{E}_i[N_A^B] = \sum_{j\in A} g_B(i,j)$. Accordingly, the two resilience metrics are obtained as: $\mathcal{R}_{\text{res}} = g_{E_3}(i, E_1)$, and $\mathcal{R}_{\text{rec}} = g_{E_1}(i, E_3)$, where $E_1$ denotes the set of perfect operating states and $E_3$ the set of non-resilient states.  
Hence, $\mathcal{R}_{\text{res}}$ quantifies the expected number of visits to perfect operating states before failure, while $\mathcal{R}_{\text{rec}}$ measures the expected number of visits to failed states before full recovery. In other words,  $\mathcal{R}_{\text{res}}$ characterizes the reservoir's endurance against extreme hydrological conditions, whereas $\mathcal{R}_{\text{rec}}$ quantifies its ability to recover normal operation after a disruption.


\section{A real case-study: The Quiebrajano reservoir}\label{sec:quiebrajano}

The Quiebrajano reservoir, located in the province of Jaén (southern Spain), has been in operation since 1976 and constitutes a key component of the regional water supply system. The reservoir has a catchment area of approximately $A = 99~\text{km}^2$ and a total storage capacity of $C_1 = 32~\text{hm}^3$. For the purposes of this study, the average annual water withdrawal for urban supply estimated to be $c_0 = 10~\text{hm}^3$. In this study, observed inflow and storage data are considered on a yearly basis from 1~October~1999 to 1~May~2025, as provided by the Guadalquivir River Basin Authority (\url{https://www.chguadalquivir.es/saih/}). 

\subsection{The Moran model for the Quiebrajano case}
On the website that provides the data, SAIH (Automatic Hydrological Information System),  the available data correspond to flow rates expressed in cubic meters per second (\( \mathrm{m^3/s} \)). These values represent the average inflow rate for each month. To estimate the total volume of water entering the reservoir during a given month, the units can be converted from \( \mathrm{m^3/s} \) to cubic hectometres (\( \mathrm{hm^3} \)) by multiplying by \( 60 \times 60 \times 24 \times 30 \times 10^{-6} \), assuming each month has 30 days. This factor accounts for the number of seconds in a 30-day month and converts cubic meters into cubic hectometres. Summing the resulting monthly volumes provides the total annual inflow in \( \mathrm{hm^3} \). 
In this study, the analysis is conducted using annual inflow data rather than monthly values. This choice is motivated by the fact that the present work constitutes an initial exploration of the available data, focusing on long-term storage behaviour rather than short-term seasonal dynamics. Consequently, the periodicity that would arise from a monthly formulation is not explicitly modelled at this stage.\\
Data for the Quiebrajano area are available from 1 October 1999, and we downloaded information up to 1 October 2024. Therefore, the sample size is $N=26$. We denote $Z_0$ the initial volume of water stored, that is, at the date 01/10/1999, and for $n\ge 1$, $Z_n=\min\{\max\{Z_{n-1}+Y_{n-1}-c_0,0\},C_1\}$ 
is the volume of water stored at the end of the $n$ year, that can take values in the interval $[0,C_1-c_0]$. The SAIH website also provides information on the annual outflow of water from the reservoir. In this study, we consider the outflow as a deterministic quantity, estimated as the mean annual volume released during the 26-year observation period, which amounts to approximately $C_0=10 \ hm^3$. According to official sources, the maximum capacity of the Quiebrajano reservoir is approximately 31.6 to 32 $hm^3$. This corresponds to its total storage capacity, while the useful capacity may be slightly lower depending on operational conditions. In this study we will consider $C_1=32$.

The inflow into the dam during an interval of time, say, 1 year, is a continuous random variable on the interval $[0,+\infty)$.  We will proceed as usual to make a simplifying  approximation in order to work with discrete quantities instead of with  continuous ones (\cite{Lochert1979}). We will take as unit of discretization $\delta=c_0$ and consider for the interval $[0,C_1-c_0]$ a similar discretization scheme as the one suggested in \cite{Moran59}. 
Let us consider $k$ such that $\left(k-\frac{1}{2}\right)c_0 < C_1-c_0 <  \left(k+\frac{1}{2}\right)c_0$, and define the following intervals: $I_0=[0,\frac{1}{2}c_0]$,$I_j=((j-\frac{1}{2})c_0,(j+\frac{1}{2})c_0]$ for $j=1,3,5,\ldots, k$, $I_k'=(( k-\frac{1}{2})c_0,C_1-c_0]$, and $I_{k+1}''=((k+\frac{1}{2})c_0, +\infty)$. 
For  the values of $Y_n$ we consider  discretization according to the set $E_y=\{I_0,\ldots, I_k, I_{k+1}''\}$ and for the values of $Z_n$ we consider $E_z=\{I_0,\ldots, I_{k-1}, I_{k}'\}$. For the Quiebrajano case we have that $C_1=32$, $c_0=10$,  and $k=3$, then we have $k+1=4$ states in $E_z$ and $k+2=5$ states in $E_y$. 
No statistical evidence against the independence of the variables $Y_n$ was found after applying Fisher's independence test. So, based on $N=26$ yearly records, we have estimated the distribution probability of  $Y$, the total inflow in a year,  which is given by

\begin{table}[ht]
\centering
\begin{tabular}{|rrrrr|}
	\hline
	$I_0$ & $I_1$ & $I_2$ & $I_3$ & $I_4''$  \\ 
	\hline
0.3462 & 0.3846 & 0.1538 & 0.0385 & 0.0769   \\ 
	\hline
\end{tabular}
\end{table}

We use the Moran model in the finite capacity case, discussed previously (see Section \ref{sec:finite}), and describe the dynamic evolution of the volume of water in the Quiebrajano dam using the  Markov chain $Z$, defined on the state space $ E_z=\{I_0,I_1,I_2,I'_3\}$, which has a total of 6 states. The corresponding transition matrix is given by
\[
\mathbf{P}_z =
\begin{bmatrix}
 0.7308 & 0.1538 & 0.0385 & 0.0769 \\ 
 0.3462 & 0.3846 & 0.1538 & 0.1154 \\ 
 0.0000 & 0.3462 & 0.3846 & 0.2692 \\ 
 0.0000 & 0.0000 & 0.3462 & 0.6538
\end{bmatrix}
\]
We can calculate the stationary distribution of $Z$ directly from this matrix, or alternatively, obtain it from the estimated inflow distribution using the algorithm described in Section \ref{algorithm},
\begin{table}[ht]
	\centering
\begin{tabular}{|rrrr|}
	\hline
	$I_0$ & $I_1$ & $I_2$ & $I_3'$  \\ 
	\hline
0.2547 & 0.1980 & 0.2389 & 0.3084  \\ 
	\hline
\end{tabular}
\end{table}
which, among other results, allows us to estimate that the long-run probability of the dam being non-empty is below 75\%, consistent with the expectations for this reservoir according to official records.
\subsection{Analysis of stored volume}
Reliability and availability measures (defined in \eqref{eq:R} and \eqref{eq:A}, respectively) provide quantitative indicators that summarize the probability of maintaining adequate storage levels over time. They help reservoir managers assess the risk of water shortage, identify safe operating conditions, and define early-warning thresholds for drought events. 
From Figure \ref{fig:R} we see that, for the Quiebrajano reservoir, these reliability estimates suggest that starting from an intermediate storage level ($Z_0=I_1$), the probability of maintaining a nonempty condition drops from 0.88 in the first year to 0.66 after four years. This pattern reveals a moderate but steady increase in the risk of depletion, emphasizing the reservoir  vulnerability under sustained low inflow conditions. In operational terms, such results can support the Guadalquivir River Basin Authority in defining drought alert thresholds, adjusting annual withdrawal volumes, and planning supplementary inflows or demand restrictions to ensure long-term supply reliability.
\begin{figure}[ht]
\centering
\includegraphics[width=0.6\textwidth]{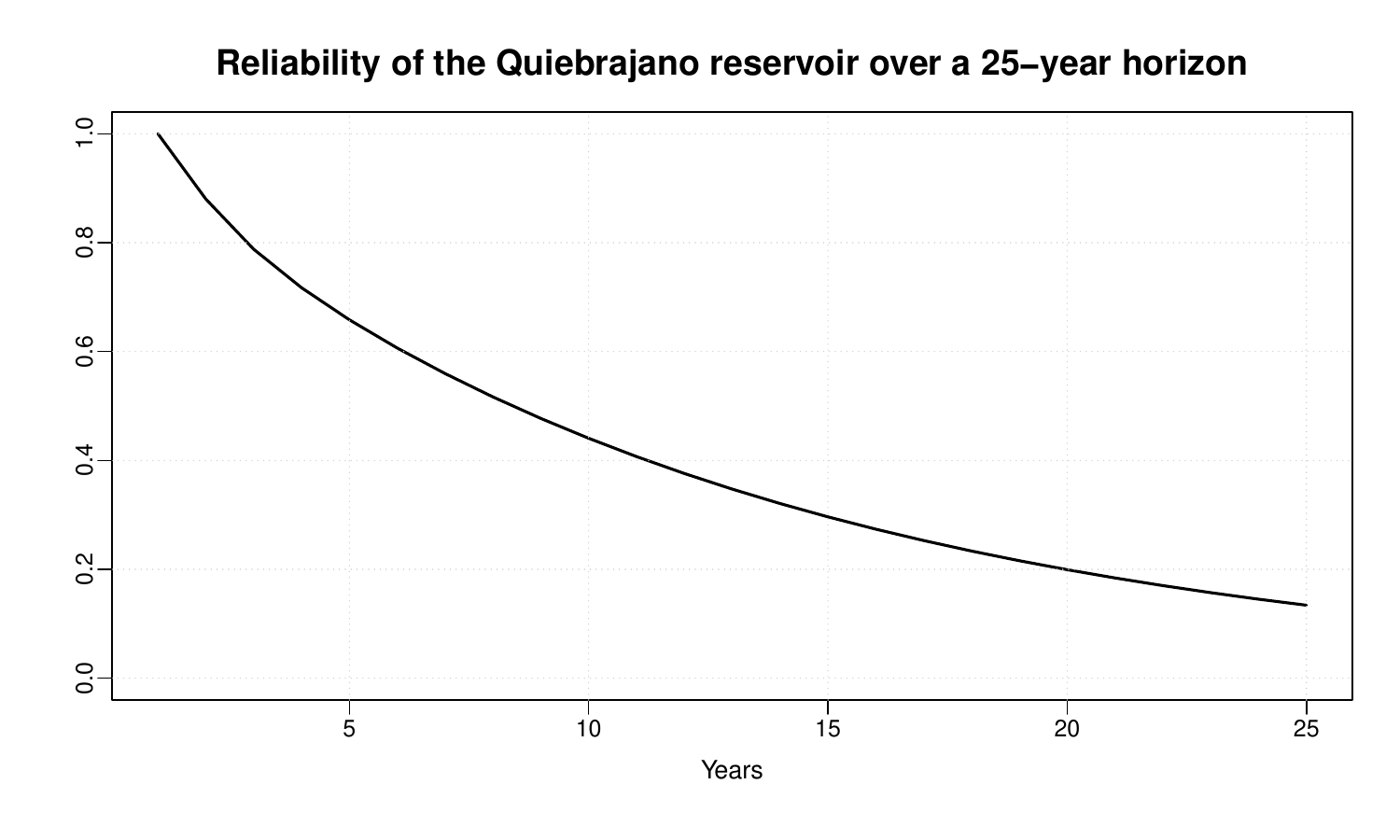}
\caption{Probability that the Quiebrajano reservoir remains non-empty throughout a 25-year period. The initial conditions are set to the ones registered in the dataset, i.e. the initial volume of stored water is at level $I_1=(1/2,3/2]$.} 
\label{fig:R}
\end{figure}
In Figure \ref{fig:A} we estimate the availability function defined in \eqref{eq:A}. This is the probability that the reservoir is non-empty (i.e., operational) at time $n$, given an initial storage level $Z_0=I_1$. Unlike the reliability function, which requires the reservoir to remain non-empty throughout the entire period, the availability only concerns the state at the specific time $n$, regardless of any empty periods that may have occurred earlier. From Figure \ref{fig:A} we observe that the long-run availability is obtained  74.5\%, meaning that, in steady operation, unless policy or system parameters change, the reservoir is expected to be non-empty about three quarters of the time.
According to the World Meteorological Organization \cite{wmo09}, a ``75~per~cent dependable yield'' indicates that the required water supply is available in at least three out of every four years. This convention supports interpreting long-term availability levels around 0.75 as satisfactory for reliable reservoir performance, whereas lower values would indicate increasing risk of supply failure.

\begin{figure}[ht]
\centering
\includegraphics[width=0.6\textwidth]{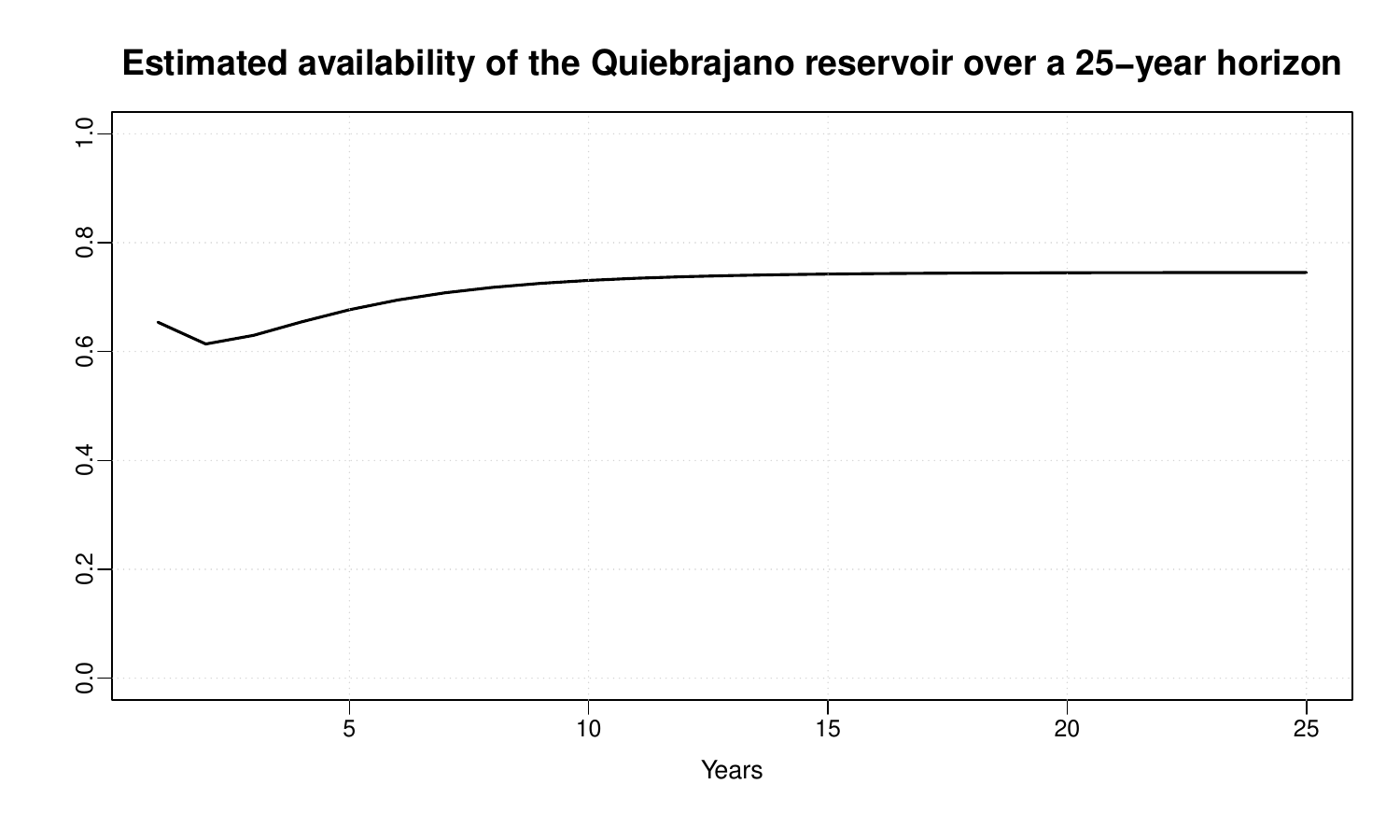}
\caption{ Probability that the Quiebrajano reservoir is empty (not necessarily for the first time) at year $n$ regardless of previous empty periods, over a 25-year simulation period. The initial volume of stored water is at level $I_1=(1/2,3/2]$, that is, the initial conditions are set to the ones registered in the dataset. 
 }\label{fig:A}
\end{figure}
Finally, Figure \ref{fig:expected} presents the historical series of observed storage volumes together with the expected values obtained from the Markov-based model. The grey dots represent the observed volumes, while the solid line shows the expected storage estimated from the fitted Moran model. The estimated curve provides a smooth approximation of the observed series, capturing the long-term equilibrium level of the reservoir. The convergence of the model expected value to the empirical mean of the data  (15.0526 $hm^3$) confirms that the model adequately reproduces the stationary behaviour of the storage process.
\begin{figure}[htb]
	\centering
	\includegraphics[width=0.6\textwidth]{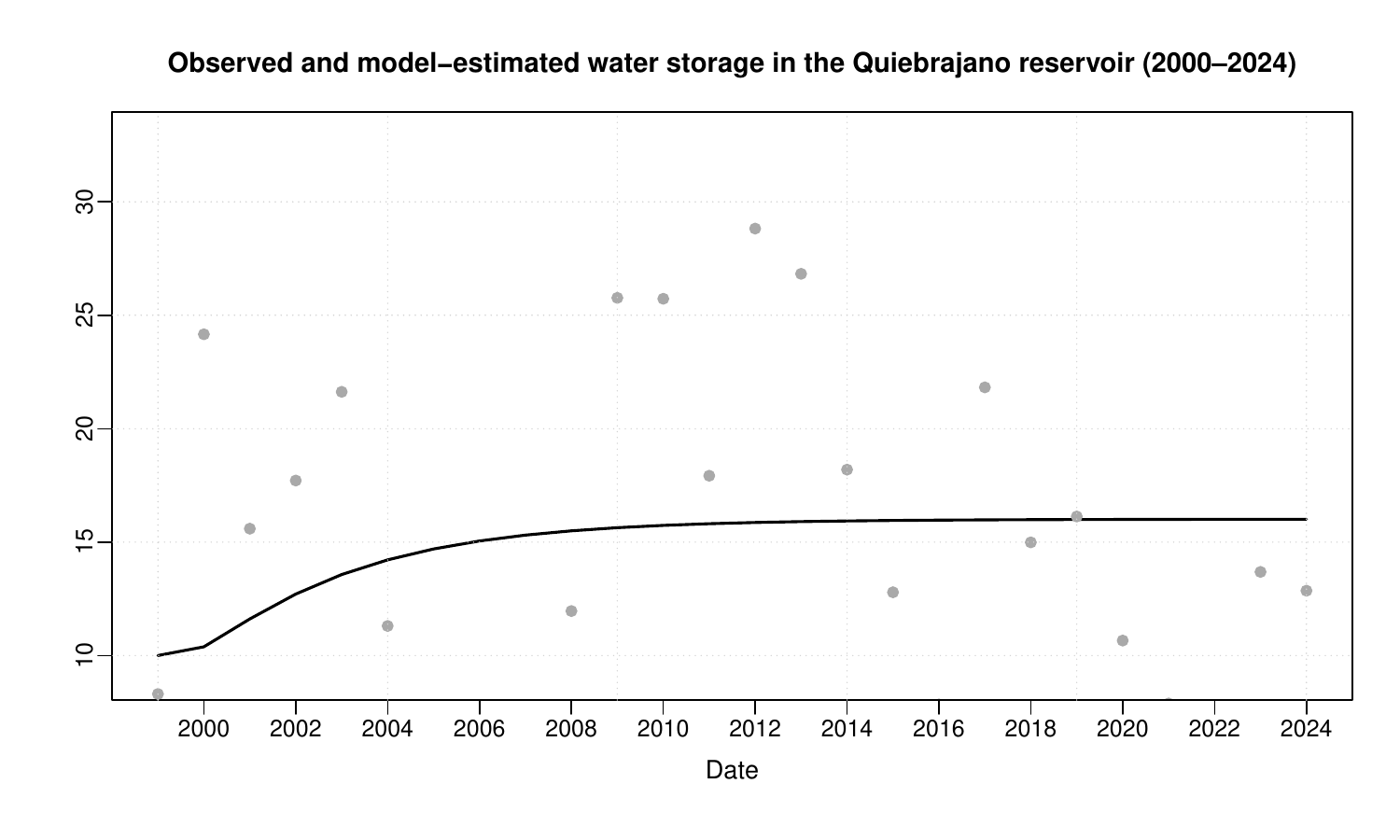}
	\caption{Observed storage volumes (grey dots) and expected values estimated from the Markov chain model (solid line). The model reproduces the long-term mean behaviour of the system, with the expected value stabilizing near the empirical average of the observed data, that is, $\overline{V}= 15.0526\  hm^3$.
	}\label{fig:expected}
\end{figure}
	
\subsection{Performance and resilience metrics}
	We consider only two classes of states:  $E_1=\{I_{0},I_1\}$ are the resilient states, while  $E_3=\{I_0,I_3\}$ are the non-resilient states. 
According to this classification we can formulate two measures of resilience as defined in Section \ref{sec:resilience}, i.e. resistance resilience ${\cal R}_{res}= \EE_{I_1}\left[N_{E_1}^{E_3}\right]$; and, recovery resilience  ${\cal R}_{rec}= \EE_{I_1}\left[N_{E_3}^{E_1}\right]$. Following Section \ref{sec:measures}, we calculate: $
{\cal R}_{res}=\sum_{j\in E_1}{\bf e}_{4,I_1}^{\top}{\bf P}_{E_3}^0\left({\bf I}-{\bf P}_{E_3}\right)^{-1}{\bf e}_{2,j}$; and, ${\cal R}_{rec}=\sum_{j\in E_3}{\bf e}_{4,I_1}^{\top}{\bf P}_{E_1}^0\left({\bf I}-{\bf P}_{E_1}\right)^{-1}{\bf e}_{2,j}$.

The estimated resistance resilience value is ${\cal R}_{res}=1.36$ years, which means that, on average, starting from a water level $I_1$, the Quiebrajano reservoir can remain within a safe operating range for about a year and four months before reaching a critical condition, either too empty or too full. This duration reflects the system  capacity to resist hydrological disturbances, such as droughts or large inflows, before entering a risky state. The estimated value is consistent with the reservoir management practices and its pronounced year-to-year variability in inflows, indicating a moderate but not high level of resilience over the long term.
On the other hand we have estimated ${\cal R}_{rec}=1.90$, which means that, once the reservoir reaches a critical state (either very low or very high storage), it takes on average almost two years to return to a normal, resilient operating level. In other words, this estimation suggests that the reservoir is resilient but with slow recovery, typical for semi-arid systems where hydrological replenishment is strongly seasonal and constrained, as it is the area in southern Spain, where the Quiebrajano dam is located.

\section{Conclusions and future research} \label{sec:conclusions}
This paper has developed a unified Markov-based framework for modelling reservoir behaviour and evaluating performance indicators such as reliability and resilience. The approach encompasses both finite- and infinite-capacity formulations, incorporates independent and Markov-dependent inflows, and extends naturally to continuous-state systems. Within this general setting, stationary water balance relations and asymptotic results, including central limit theorems for long-term storage behaviour, were established. The framework was then applied to the Quiebrajano dam to demonstrate its practical relevance.

Future work may progress along several complementary directions. A central aspect will be the close collaboration with domain experts. Some of the results from this study have already been presented to the engineer responsible for the dam, who not only considered the approach useful for operational support but also noted that our analysis revealed inconsistencies in the SAIH database that will require expert verification. This underscores the value of combining modelling efforts with practitioner insight, and we are currently exploring opportunities for continued cooperation.

From a methodological perspective, future developments include the treatment of non-stationary inflows to capture long-term climatic variability, the incorporation of non-deterministic outflows, and the integration of sensor-based monitoring data to obtain more detailed information and refine model calibration. Another promising avenue is linking the stochastic modelling framework with optimisation or decision-support procedures for adaptive reservoir management under competing objectives. Finally, analysing resilience using continuous monitoring data could offer valuable insights for decision-making and contribute to more robust and sustainable operational strategies.

\section*{Acknowledgments}
\noindent 

The authors gratefully acknowledge support from the Spanish Ministry of Science and Innovation - State Research Agency through grant PID2024-156234NB-C22. This work is also supported in part by the IMAG-Maria de Maeztu grant CEX2020-001105-M / AEI / 10.13039/501100011033.

\noindent We gratefully acknowledge the support and valuable insights provided by Estrella Montoro Cazorla, Operations Manager of the Quiebrajano Dam (Confederaci\'on Hidrogr\'afica del Guadalquivir).

\noindent The hydrological and operational data used in this study were obtained from \url{https://saih.chguadalquivir.es} (last accessed on 1 November 2025).

\end{document}